    \newcolumntype{P}[1]{>{\centering\arraybackslash}p{#1}}
    \newcolumntype{M}[1]{>{\centering\arraybackslash}m{#1}}
\newtheorem{theorem}{Theorem}
\newtheorem{definition}{Definition}
\newtheorem{corollary}{Corollary}
\newcommand{\ms}{\mu}
\newcommand{\maxcut}{MAX-CUT }
\newcommand{\red}[1]{#1}
\newcommand{\cyan}[1]{}
\newcommand{\symbffrak}[1]{\mathfrak{#1}}
\def\qed{\hfill $\Box$}
\newenvironment{proof}[1][\proofname]{\par
  \normalfont
  \topsep6\p@\@plus6\p@ \trivlist
  \item[\hskip\labelsep{\bfseries #1}\@addpunct{\bfseries.}]\ignorespaces
}{
  \endtrivlist
}
\title{Recursive Quantum Relaxation for Combinatorial Optimization Problems}
\author{Ruho Kondo}
\affiliation{Toyota Central R\&D Labs., Inc., 41-1, Yokomichi, Nagakute, Aichi 480-1192, Japan}
\affiliation{Quantum Computing Center, Keio University, 3-14-1 Hiyoshi, Kohoku-ku, Yokohama, Kanagawa 223-8522, Japan}
\email{r-kondo@mosk.tytlabs.co.jp}
\author{Yuki Sato}
\affiliation{Toyota Central R\&D Labs., Inc., 41-1, Yokomichi, Nagakute, Aichi 480-1192, Japan}
\affiliation{Quantum Computing Center, Keio University, 3-14-1 Hiyoshi, Kohoku-ku, Yokohama, Kanagawa 223-8522, Japan}
\author{Rudy Raymond}
\affiliation{Quantum Computing Center, Keio University, 3-14-1 Hiyoshi, Kohoku-ku, Yokohama, Kanagawa 223-8522, Japan}
\affiliation{Department of Computer Science, The University of Tokyo, 7-3-1, Hongo, Bunkyo-ku, Tokyo 113-0033, Japan}
\thanks{Former affiliation was IBM Quantum, IBM Japan.}
\author{Naoki Yamamoto}
\affiliation{Quantum Computing Center, Keio University, 3-14-1 Hiyoshi, Kohoku-ku, Yokohama, Kanagawa 223-8522, Japan}
\affiliation{Department of Applied Physics and Physico-Informatics, Keio University, Hiyoshi 3-14-1, Kohoku-ku, Yokohama 223-8522, Japan}
\begin{document}

\maketitle

\begin{abstract}
Quantum optimization methods use a continuous degree-of-freedom of quantum states to heuristically solve combinatorial problems, such as the \maxcut problem, which can be attributed to various NP-hard combinatorial problems. 
This paper shows that some existing quantum optimization methods can be unified into a solver to find the binary solution which is most likely measured from the optimal quantum state. 
Combining this finding with the concept of quantum random access codes (QRACs) for encoding bits into quantum states on fewer qubits, we propose an efficient recursive quantum relaxation method called recursive quantum random access optimization (RQRAO)  for MAX-CUT. 
Experiments on standard benchmark graphs with several hundred nodes in the \maxcut problem, conducted in a fully classical manner using a tensor network technique, show that RQRAO \red{not only} outperforms the Goemans–Williamson \red{and recursive QAOA methods, but also} is comparable to state-of-the-art classical solvers.
The code \red{is available at \url{https://github.com/ToyotaCRDL/rqrao}.}
\end{abstract}

\section{Introduction}
\label{sec:introduction}
\begin{figure*}[t]
    \centering
    \includegraphics[scale=1.2]{./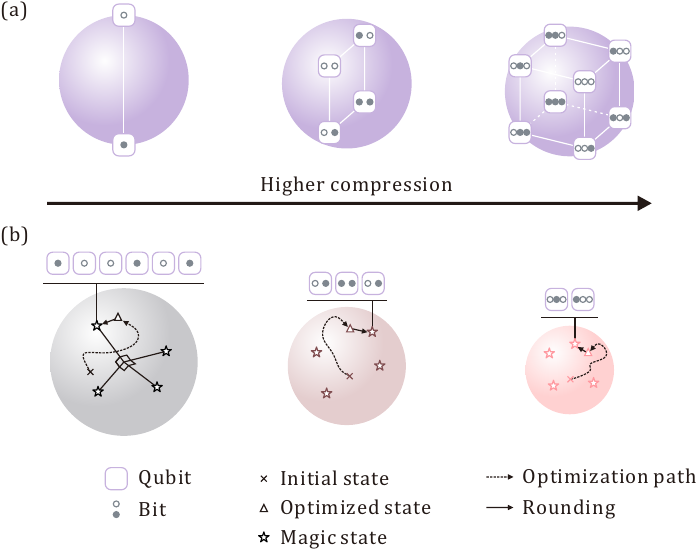}
    \caption{
    Concept of the proposed method, recursive quantum random access optimization (RQRAO).
    (a) One/two/three classical bits are embedded into one qubit using a quantum random access code (QRAC)~\cite{ambainis1999dense,ambainis2002dense}.
    A total of two/four/eight variants of the information of the one/two/three classical bits are represented as a vector pointing to the corner of the line/square/cube inscribed in the Bloch sphere.
    These specific quantum states are called (one-qubit) magic states.
    (b) Using a high compression QRAC, the size of the Hilbert space corresponding to the given problem is reduced and each (multi-qubit) magic state is no longer necessarily orthogonal to the others.
    Note that only four of the $2^6$ magic states are illustrated.
    After optimizing the objective function defined on the reduced Hilbert space, the bit string is decoded by searching for the nearest magic state as measured in the fidelity, which is called \textit{rounding}.
    To accomplish a good rounding, a recursive strategy similar to the RQAOA~\cite{bravyi2020obstacles} is adopted.
    }
  \label{fig:rqrao-concept}
\end{figure*}

The \maxcut problem is one of Karp's 21 NP-complete problems~\cite{karp2010reducibility} and is frequently used as a benchmark of combinatorial optimization algorithms because of its simplicity.
Namely, given a graph that consists of nodes connected by weighted edges, the optimization variant of the \maxcut problem asks for the labeling of nodes into two classes such that the sum of the weighted edges connecting nodes with different labels is maximized.
Obtaining good solutions for the \maxcut problem is in great demand, because various NP-hard combinatorial problems, e.g., partitioning, covering, and coloring problems, can be transformed into the \maxcut problem~\cite{lucas2014ising,glover2022quantum}.
Although the \maxcut problem is a discrete optimization problem, \red{it can be solved by relaxing the discrete variables into continuous ones.
For example, the semidefinite program in~\cite{goemans1995improved} finds an embedding of binary variables as real-valued vectors in a high-dimensional sphere so that the labels can be probabilistically determined by partitioning the vectors with a random hyperplane cutting.
(see Supplementary Information~\ref{sec:classical-maxcut}).
}
However, the quantum optimization does not take such an approximation because the probabilistic characteristics are naturally incorporated into the quantum states.
In \red{most approaches to the \maxcut problem using quantum algorithms,} all solutions are encoded in the eigenstates of the problem Hamiltonian, and then the quantum state is optimized such that the expected value of the problem Hamiltonian is maximized \red{(see Supplementary Information~\ref{sec:quantum-maxcut})}.
From a theoretical viewpoint, it has been proven that the eigenstate with the maximum eigenvalue can be obtained by the infinite-level quantum approximate optimization algorithm (QAOA)~\cite{qaoa1,wang2018quantum} with the appropriate parameters.
However, in a noisy quantum device, it is intractable to carry out even a sufficiently large level QAOA, let alone an infinite level QAOA.
For this reason, the field of combinatorial problems with quantum computing is dominated by the development of heuristics, that mainly focus on the development of the ansatz variants and improving the trainability of QAOA~\cite{blekos2023review}.

Rather than developing the QAOA ansatz, an alternative approach is to develop a non-diagonal Hamiltonian instead of the diagonal Ising-type one used in the QAOA.
In \cite{fuller2021approximate}, the \maxcut Hamiltonian is constructed using quantum random access codes (QRACs)~\cite{ambainis1999dense,ambainis2002dense} (see \red{Sec.~\ref{sec:qrao} for details, and Supplementary Information}~\ref{sec:qrac} for a review).
Using QRACs, the $n$-qubit quantum state can be used to encode $m$ binary variables for $m > n$.
The QRAC Hamiltonian is still a 2-local Hamiltonian, and therefore it is QMA-hard to find the ground state~\cite{kempe2006complexity}. However, the QRAC formulation has the advantage that the number of qubits can be reduced, making it easier to optimize the variational state.

Recently, \cite{bravyi2020obstacles} proposed the recursive QAOA (RQAOA), showing that the recursive application of level-1 QAOA gives an empirically higher approximation ratio than the na\"ive application of level-1 QAOA and the Goemans--Williamson (GW) algorithm~\cite{goemans1995improved}, where the latter is the best generic classical algorithm with a theoretical guarantee for the \maxcut problem under the unique games conjecture~\cite{khot2002power}.
This result is interesting because in \cite{bravyi2020obstacles}, it was proved that the approximation ratio of the na\"ive level-1 QAOA cannot outperform the GW for some specific graphs.
The application of the RQAOA has been extended to the MAX-$k$-CUT problem in \cite{bravyi2022hybrid}, which also shows that the RQAOA outperforms the best-known classical approximation algorithms.

Here, the following questions naturally arise: Is it possible to apply the QRAC in the RQAOA framework?
Furthermore, if it is possible, are there any benefits of using the QRAC in the RQAOA framework, such as improved approximation ratios?
In this paper, we propose a general framework for the \maxcut problem that incorporates the QRAC~\cite{fuller2021approximate} into the RQAOA~\cite{bravyi2020obstacles} framework, namely {\it recursive quantum random access optimization (RQRAO, see Fig.~\ref{fig:rqrao-concept})}.
Our contributions are summarized as follows:
\begin{enumerate}
    \item We theoretically show that incorporating QRACs into the RQAOA framework is a natural extension of \cite{fuller2021approximate} and \cite{bravyi2020obstacles}, i.e., they are all uniformly expressed as a method that (i) maximizes the expectation of the problem Hamiltonian so that all candidate solutions are embedded using QRACs, and (ii) searches for the nearest embedding as measured in the fidelity (Theorem~\ref{theorem:objective}).
    \item We give numerical experiments to show the benefit of incorporating QRACs into the RQAOA framework. Demonstrations on an open graph dataset, Gset~\cite{helmberg2000spectral}, were carried out using the tensor network simulation conducted on a classical computer.
    The results showed that the approximation ratio of the proposed method outperforms not only GW but also the methods from which inspiration was taken~\cite{fuller2021approximate,bravyi2020obstacles}, and is comparable to existing state-of-the-art classical heuristics~\cite{burer2002rank}. 
\end{enumerate}



Note that the proposed framework is entirely classical except for the ground state preparation step, and therefore it may not provide the so-called quantum advantage.
However, when an efficient quantum algorithm is used for the ground-state preparation of the framework, the overall runtime can be improved.

The remainder of the paper is organized as follows:
Section~\ref{sec:preliminaries} presents the mathematical definition of the \maxcut problem and describes the RQAOA and QRAC in brief.
Section~\ref{sec:proposed} describes the proposed algorithm and its theoretical background.
Section~\ref{sec:experiments} presents the results of numerical experiments.
Section~\ref{sec:conclusion} concludes the paper.

\section{Preliminaries}
\label{sec:preliminaries}

\subsection{\maxcut Problem}
\label{sec:maxcut}
As stated in Sec.~\ref{sec:introduction}, the goal of the \maxcut problem is to find binary node labels such that the sum of the weights of edges with different node labels is maximized.
It is mathematically formulated as follows:
\begin{definition}[\maxcut problem]
\label{def:maxcut}
    Given a graph $G=(V,E)$, where $V$ is a node set and $E$ is an edge set with edge weights $w_{jk}\in\mathbb{R}$ for $(j,k)\in E$, the \maxcut problem is defined as
    \begin{equation}
    \label{eq:def-maxcut}
    \max_{{\bf b}\in\{0,1\}^{|V|}} \texttt{CW}({\bf b})
    \end{equation}
    where
    \begin{equation}
    \label{eq:def-cut}
    \texttt{CW}({\bf b}):= \sum_{(j,k)\in E} w_{jk}\ \frac{1- (-1)^{b_j + b_k}}{2} 
    \end{equation}
    is the cut weight corresponding to ${\bf b} = (b_0, \ldots, b_{|V|-1})$, where $b_i \in \{0,1\}$ is an attribute of node $i$.
\end{definition}
\subsection{Recursive QAOA}
\label{sec:rqaoa}
Recursive QAOA (RQAOA)~\cite{bravyi2020obstacles} is a variant of QAOA for the \maxcut problem, which is an iterative method that determines the parity of one edge in a graph per iteration, deletes a node connected by the edge and modifies the graph.
Here, edge parity is defined as follows:
\begin{definition}[Edge parity]
    Given edge $(j, k)$, positive and negative edges are defined as $b_j=b_k$ and $b_j \neq b_k$, respectively. The edge parity is said positive (negative) for a positive (negative) edge. 
\end{definition}
In the first step of RQAOA, a candidate edge whose parity is to be determined is calculated using optimization.
Let $|\psi({\boldsymbol\theta})\rangle$ be a QAOA ansatz with learnable parameters $\boldsymbol\theta$ and
\begin{equation}
\label{eq:def-ising}
    H_{\rm Ising} := \sum_{(j,k)\in E} w_{jk}\,\frac{I-Z_jZ_k}{2} \\
\end{equation}
be the Ising-type \maxcut Hamiltonian, where $Z_j$ is a Pauli $Z$ matrix corresponding to the $j$th qubit.
The optimized parameters ${\boldsymbol \theta}^*$ are obtained by
maximizing $\langle \psi({\boldsymbol\theta}) | H_{\rm Ising} | \psi({\boldsymbol\theta})\rangle$. 
After optimization, the energy of edge $(j,k)\in E$ is calculated as
\begin{equation}
\label{eq:edge-energy-ZZ}
\mathcal{E}_{jk}
:=\langle\psi({\boldsymbol \theta}^*)|Z_jZ_k|\psi({\boldsymbol \theta}^*)\rangle.
\end{equation}
The candidate edge $(j^*,k^*)$ is defined as a maximal energy edge as
\begin{equation}
\label{eq:}
(j^*,k^*)=\mathop{\rm argmax}_{(j,k)\in E} |\mathcal{E}_{jk}|.
\end{equation}
Then, the parity of edge $(j^*,k^*)$ is determined as 
\begin{eqnarray}
\label{eq:parity}
\left\{
    \begin{array}{ll}
        b_{j^*}=b_{k^*} & \mathcal{E}_{j^*k^*}>0\\[5mm]
        b_{j^*}\neq b_{k^*} & \mathcal{E}_{j^*k^*}<0
    \end{array}
\right. .
\end{eqnarray}
Note that the probability that all edge energy is equal to zero ($\mathcal{E}_{j^*k^*}=0$) is negligibly small and can be ignored.

In the next step, the problem Hamiltonian is modified to satisfy Eq.~\eqref{eq:parity} as a constraint.
This modification can be interpreted as modifying the graph such that the number of its nodes is reduced by one, as shown in Fig.~\ref{fig:regraph}.
\begin{figure}[t]
    \centering
    \subfigure[]{
        \centering
        \includegraphics[scale=.85]{./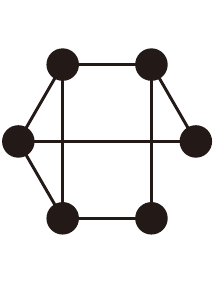}
        \label{fig:}
    }
    \subfigure[]{
        \centering
        \includegraphics[scale=.85]{./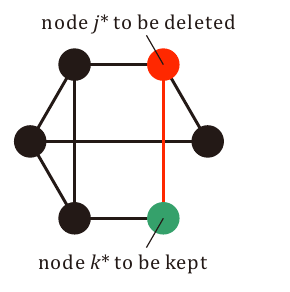}
        \label{fig:}
    }
    \\
    \centering
    \subfigure[]{
        \centering
        \includegraphics[scale=.85]{./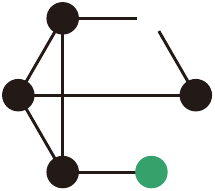}
        \label{fig:}
    }
    \subfigure[]{
        \centering
        \includegraphics[scale=.85]{./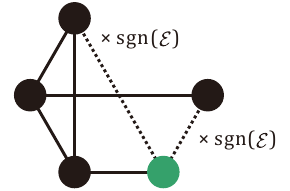}
        \label{fig:}
    }
    \caption{Graph modification procedure of RQAOA.
    (a) Initial graph.
    (b) The red edge indicates ${(j^*,k^*)=\mathop{\rm argmax}_{(j,k)}}|\mathcal{E}_{jk}|$, where $|\mathcal{E}_{jk}|:=|\langle\psi|Z_jZ_k|\psi\rangle|$.
    One of the nodes in edge $(j^*,k^*)$ is kept (the green node) while the other one is deleted (the red node).
    (c) After node removal.
    (d) Edges are re-connected to the retained node by multiplying ${\rm sgn}(\mathcal{E}_{j^*k^*})$ by its weight.}
  \label{fig:regraph}
\end{figure}
Node $j^*$ is deleted from the graph and edge $(j^*, l)$ $(l\neq k^*)$ is re-connected as $(k^*, l)$.
At the same time, edge weight $w_{k^*l}$ is updated to $w_{k^*l}+{\rm sgn}(\mathcal{E}_{j^*k^*})\cdot w_{j^*l}$ $(l\neq k^*)$.
As a result, one node is deleted from the graph by a single iteration.
Subsequently, the variational state is re-initialized and optimized using the Hamiltonian defined with respect to the modified graph.

The number of nodes is gradually decreased by repeating the edge parity determination and graph size reduction.
When the number of nodes becomes smaller than a threshold value $M$, the parities of the remaining nodes are determined by brute-force search.
\red{
Once all parities are determined, a label for one node is randomly set to either 0 or 1. 
Subsequently, the labels for all other nodes are determined based on the assigned parities.
}

\subsection{Quantum Random Access Optimization}
\label{sec:qrao}
In the Quantum Random Access Optimization (QRAO)~\cite{fuller2021approximate}, binary node attributes are encoded in fewer qubits than the number of nodes using quantum random access code (QRAC)~\cite{ambainis1999dense,ambainis2002dense}.
For each node $j\in V$, binary node attribute $b_j\in\{0,1\}$ is randomly mapped to $(q^{(j)}, P^{(j)})$, where $q^{(j)}$ is a qubit index and $P^{(j)}$ is a Pauli matrix with the following constraints:
\begin{equation}
\label{eq:constraint}
\left\{
    \begin{array}{ll}
        \displaystyle{ P^{(j)} \neq P^{(k)} } & {\rm for\ \ \ } q^{(j)}=q^{(k)}  \\[5mm]
        \displaystyle{ q^{(j)} \neq q^{(k)} } & {\rm \forall\ \ \ } (j,k)\in E
    \end{array}
\right. .
\end{equation}
In the quantum random access coding, there are variations that depend on the number of variants of the Pauli matrices.
If only $Z$ is available, $P^{(j)}=Z$, it reduces to the Ising-type formulation, which we refer to $(1,1)$-QRAC.
For $(2,1)$-QRAC, two variants of 1-local Pauli matrices, e.g., $P^{(j)}\in\{X,Z\}$, are available.
For $(3,1)$-QRAC, $P^{(j)}\in\{X,Y,Z\}$ are available.
Then, the relaxed \maxcut Hamiltonian, which we refer to as the $(m,1)$-QRAC Hamiltonian $(m=1,2,3)$, is defined as
\begin{equation}
\label{eq:h-qrac}
H_m := \sum_{(j,k)\in E} w_{jk}\,\frac{I-mP_{\langle j\rangle}P_{\langle k\rangle}}{2},
\end{equation}
where $P_{\langle j\rangle}:=P^{(j)}_{q^{(j)}}$ is a 1-local Pauli matrix corresponding to the $q^{(j)}$th qubit. 
If the given graph is sparse, the second constraint in  Eq.~\eqref{eq:constraint} rarely restricts the assignment, and hence, the total number of qubits can be reduced by up to $1/m$-th the number of nodes.

Let $\mu^{[q]}_m\in\mathbb{C}^{2\times 2}$ be the $q$th single qubit pure state in the form of a density matrix defined as
\begin{equation}
\label{eq:def-magic-state}
\mu^{[q]}_m(\{b^{[q]}_\bullet\})
:=\frac{1}{2}\Bigg(I+\frac{1}{\sqrt{m}} \sum_{P\in\mathfrak{P}}(-1)^{b^{[q]}_P}P\Bigg),
\end{equation}
where $\mathfrak{P}\subset\{X,Y,Z\}$, $|\mathfrak{P}|=m$, and $b^{[q]}_P\in\{0,1\}$ is a binary node attribute mapped to $(q, P)$.
\begin{figure}[t]
    \centering
    \includegraphics[scale=.8]{./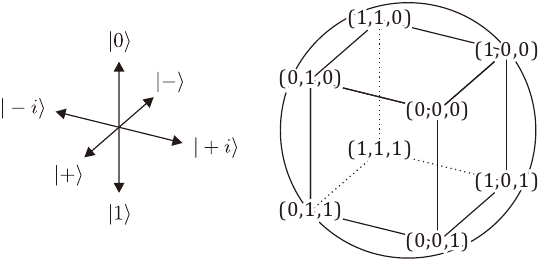}
    \caption{Representation of the $(3,1)$-magic state in the Bloch sphere. The coordinates of the vertex of the cube inscribed in the sphere correspond to the embedded classical bits $(b^{[q]}_X,b^{[q]}_Y,b^{[q]}_Z)$.}
  \label{fig:bloch}
\end{figure}
Because $\mu^{[q]}_m(\{b^{[q]}_\bullet\})$ is a one-qubit pure state, it can be drawn in the Bloch sphere (Fig.~\ref{fig:bloch}).
For example, in the case of $m=3$, the states $\mu^{[q]}_3(\{b^{[q]}_\bullet\})$ for $(b^{[q]}_X,b^{[q]}_Y,b^{[q]}_Z)\in\{0,1\}^3$ correspond to the vectors pointing the vertices of the cube inscribed in the Bloch sphere.
The $n$-qubit product state of $\mu^{[q]}_m$ is defined as
\begin{equation}
\label{eq:product-state}
\mu_m({\bf b})
    :=\mu_m^{[n-1]}(\{b^{[n-1]}_\bullet\}) \otimes \cdots \otimes \mu_m^{[0]}(\{b^{[0]}_\bullet\}).
\end{equation}
According to \cite[Proposition 1]{fuller2021approximate}, the following property holds.
\begin{equation}
\label{eq:}
    {\rm tr}(H_m\,\mu_m({\bf b})) = \texttt{CW}({\bf b}).
\end{equation}
Of course, the $n$-qubit quantum state is not required to be a product state such as Eq.~\eqref{eq:product-state}.
If $\rho$ is assumed to be an arbitrary $n$-qubit state that includes the aforementioned product state, ${\rm tr}(H_m\,\rho)$ could be higher than ${\rm tr}(H_m\,\mu({\bf b}))$ when $m>1$, where $\rho$ may not be directly related to an optimal binary solution.
This is why $H_m$ $(m>1)$ is called the \textit{relaxed} Hamiltonian.
Note that it has not been clarified what exactly ${\rm tr}(H_m\,\rho)$ represents when $\rho$ is an arbitrary state in \cite{fuller2021approximate}, but this will be clarified in Corollary~\ref{corollary:qrac-hamiltonian} in Sec.~\ref{sec:theoretical}. 

After maximizing ${\rm tr}(H_m\rho)$ with respect to $\rho$, the classical bit string ${\bf b}$ is decoded from the optimized $\rho$.
This process is called \red{{\it rounding}}.
In \cite{fuller2021approximate}, two kinds of rounding were proposed, magic state rounding and Pauli rounding. 
Note that magic state rounding is the projective measurement whose basis is the magic state selected uniformly at random.
The detailed protocol is defined in Definition~\ref{def:magic-meas} of Supplementary Information~\ref{sec:shadow-magic}, and as can be seen to closely resemble the classical shadow with Random Pauli measurements~\cite{huang2020predicting}.
For this reason, the magic state rounding is henceforth referred to as {\it random magic measurement}. 
Note that $(1,1)$-random magic measurement with $\mathfrak{P}=\{Z\}$ is equivalent to measurements on a computational basis.
Pauli rounding is a decoding method based on the value of ${\rm tr}(P_{\langle j \rangle}\rho)$.
The value of ${\bf b}$ is determined as
\red{
\begin{equation}
\label{eq:pauli-rounding-org}
b_{j}=\left\{
\begin{array}{ll}
0 & \big({\rm tr}(P_{\langle j \rangle}\rho)>0\big)\\[2mm]
1 & \big({\rm tr}(P_{\langle j \rangle}\rho)<0\big)
\end{array}
\right..
\end{equation}
}
If ${\rm tr}(P_{\langle j \rangle}\rho)=0$, $b_j$ is determined as $0$ or $1$ uniformly at random.
In Sec.~\ref{sec:theoretical}, we provide theorems to show the relationship between random magic measurements and Pauli rounding and reveal what assumption is made in Pauli rounding.

\section{Proposed Method}
\label{sec:proposed}
The proposed method, named RQRAO, is introduced in this section.

Section~\ref{sec:algorithm} describes the concrete algorithm of the proposed method.
In brief, the proposed method incorporates the $(m,1)$-QRAC Hamiltonian (Eq.~\eqref{eq:h-qrac}) into the RQAOA framework (Sec.~\ref{sec:rqaoa}).
The edge parity is recursively determined based on the value of the edge energy, which is defined using the $(m,1)$-QRAC Hamiltonian.
Two new heuristics are also incorporated into the proposed method: the ensemble method of defining the edge energy, and determining several edge parities at once using the maximum spanning tree.
The former improves the resulting cut weight whereas the latter reduces the runtime of the proposed method.

In Sec.~\ref{sec:theoretical}, we show that the proposed method, RQAOA, and QRAO 
\begin{enumerate}
    \item maximize the expected cut weight using $(m,1)$-random magic measurements with respect to the quantum state, and
    \item decode the bit string embedded in the nearest magic state as measured in the fidelity against the optimized quantum state.
\end{enumerate}

\subsection{Algorithm}
\label{sec:algorithm}

The pseudocode of RQRAO is presented in Algorithm~\ref{alg:recursive}.
The algorithm describes how the \textit{ensemble edge energy} $\mathfrak{E}_{jk}$ of edge $(j,k)$ is calculated and how the edge parity is calculated from $\mathfrak{E}_{jk}$.
The details are described below.

\begin{algorithm*}[tb]
    \caption{Algorithm for recursive quantum random access optimization (RQRAO).}
    \label{alg:recursive}
    \SetKwInOut{Input}{input}
    \SetKwInOut{Output}{output}
    \SetKwInOut{Initialize}{initialize}
    \SetKwRepeat{Do}{do}{while}
    \Input{Graph $G=(V,E)$, number of ensembles $N$, scale factor $S$, brute-force search threshold $M$}
    \Output{\maxcut solution ${\bf b}^{*}$}
    \Initialize{$\mathcal{P}\leftarrow \emptyset$, $G'=(V',E')\leftarrow G=(V,E)$}
    $w'_{jk} \leftarrow w_{jk} + \xi,\ \ \ \xi\sim{\rm Uniform}([-10^{-5},10^{-5}])$\;
    \While{$|V'|>M$}{
        \tcc{Compute the ensemble edge energy}
        \For{$t=1,\ldots,N$}{
            Assign Pauli $P^{(t)}$ to each node $j\in V'$ and make the $(m,1)$-QRAC Hamiltonian $H_m^{(t)}(G')$\;
            $\theta^{(t)*} \leftarrow $ Result of maximization of $\langle \psi(\theta)| H_m^{(t)}(G') |\psi(\theta)\rangle$\;
            $\displaystyle{\mathcal{E}^{(t)}_{jk} \leftarrow \langle \psi(\theta^{(t)*})| P^{(t)}_{\langle j \rangle} P^{(t)}_{\langle k \rangle} ~ |\psi(\theta^{(t)*})\rangle}$ for all $(j,k)\in E'$\;
        }
        \For{$(j,k)\in E'$}{
            $\mu_{jk}\leftarrow \texttt{Mean}(\{\mathcal{E}_{jk}^{(t)}\})$\;
            $\sigma_{jk}\leftarrow \texttt{StandardDeviation}(\{\mathcal{E}_{jk}^{(t)}\})$\;
            $\displaystyle{\mathfrak{E}_{jk} \leftarrow} \mu_{jk}- {\rm sign}(\mu_{jk}) \cdot {\rm min}(S \sigma_{jk}, |\mu_{jk}|)$\;
        }
        \tcc{Compute the edge parity and modify the graph} 
        {{
            $E_{\rm S} \leftarrow \{ (j,k,|\mathfrak{E}_{jk}|):(j,k)\in E', |\mathfrak{E}_{jk}|>0\}$\;
            $E_{\rm T}\leftarrow\texttt{MaximumSpanningTree}(E_{\rm S})$\;
            \For{$(j,k)\in E_{\rm T}$}{
                $\mathcal{P}\leftarrow \mathcal{P}\cup\{(j,k,{\rm sign}(\mathfrak{E}_{jk}))\}$\;
                $G'=(V',E') \leftarrow \texttt{GetReducedGraph}(G', (j,k))$
            }
        }}
    }
    ${\bf b}^* \leftarrow \texttt{BruteForceSearch}(G, \mathcal{P})$\;
\end{algorithm*}

\paragraph{Line 1: Add Noise to Edges}

Add small perturbation noise to all edge weights to mitigate the \textit{isolated nodes problem} (see Supplementary Information~\ref{sec:nodes-isolation}).
We note that the isolated nodes problem also occurs at RQAOA~\cite{bravyi2020obstacles} although not reported in the previous literature.
We found out the perturbation is effective to deal with the problem. 

\paragraph{Line 4: Define the Relaxed Hamiltonian}
Assign 1-local Pauli $P^{(t)}$ to each node included in the current graph $G'=(V',E')$, where superscript $(t)$ describes the $t$th trial.
Note that $t$ is the sample index of the ensemble, not to be confused with the recursive iteration index.

The assignment is randomly carried out so as not to violate constraints Eq.~\eqref{eq:constraint}.
Then, the $t$th relaxed Hamiltonian is defined as
\begin{equation}
\label{eq:}
H_m^{(t)}:=\sum_{(j,k)\in E'} \frac{I-mP^{(t)}_{\langle j\rangle} P^{(t)}_{\langle k\rangle}}{2}w'_{jk} .
\end{equation}

\paragraph{Line 5: Optimize the Quantum State}

Define ansatz $|\psi({\boldsymbol\theta})\rangle$, where ${\boldsymbol\theta}\in\mathbb{R}^K$ are $K$ learnable parameters.
\red{For example, we used the matrix product state ansatz in the numerical experiments, where all matrix elements were treated as the learnable parameters (see Implementation Details in Sec.~\ref{sec:experiments}).}
The objective function is defined as
\begin{equation}
\label{eq:}
\mathcal{L}^{(t)}({\boldsymbol\theta}):=\langle{\psi}(\boldsymbol\theta)|H_m^{(t)}|{\psi}(\boldsymbol\theta)\rangle.
\end{equation}
The optimized quantum state $|\psi({\boldsymbol\theta}^*)\rangle$ is obtained by maximizing $\mathcal{L}^{(t)}({\boldsymbol\theta})$.
Any optimization method for the parameters of the ansatz can be used here.

\paragraph{Line 6: Compute the Edge Energy}
The $t$th edge energy is computed as 
\begin{equation}
\label{eq:def-energy-t}
\mathcal{E}^{(t)}_{jk}
:=\langle\psi({\boldsymbol\theta}^*)|P^{(t)}_{\langle j \rangle}P^{(t)}_{\langle k \rangle}|\psi({\boldsymbol\theta}^*)\rangle.
\end{equation}
The resulting edge energy strongly depends on both the assignment of 1-local Pauli $P^{(t)}$ and the reached quantum state, which is in general not a eigenstate corresponding to the maximum eigenvalue of $H_m^{(t)}$.

\paragraph{Lines \red{8--12}: Compute the Ensemble Edge Energy}

Using $N$ edge energies, $\{\mathcal{E}^{(t)}_{jk}\}^N_{t=1}$, the ensemble edge energy is calculated as 
\begin{equation}
\label{eq:def-energy}
\mathfrak{E}_{jk}
:=\mu_{jk}- {\rm sign}(\mu_{jk}) \cdot {\rm min}(S \sigma_{jk}, |\mu_{jk}|)
\end{equation}
where $\mu_{jk}$ and $\sigma_{jk}$ are the mean and standard deviation of $\{\mathcal{E}^{(t)}_{jk}\}^N_{t=1}$, which are defined as $\mu_{jk} := \frac{1}{N} \sum^N_{t=1}\mathcal{E}^{(t)}_{jk}$ and $\sigma_{jk} := \sqrt{\frac{1}{N} \sum^N_{t=1}(\mathcal{E}^{(t)}_{jk}-\mu_{jk})^2}$, respectively
\red{, and $S$ is a hyperparameter designed such that if $|\mu_{jk}| \leq S\sigma_{jk}$, then $\mathfrak{E}_{jk} = 0$.}

\red{
The reason for taking an ensemble is twofold.
First, unlike QAOA whose edge energy is solely from the ZZ interaction as in Eq.~\eqref{eq:edge-energy-ZZ}, the edge energy of QRAO can be defined by 9 different pairs of 1-local Pauli as in Eq.~\eqref{eq:def-energy-t} whose corresponding Hamiltonian and ground state can be significantly different: some may be easier to compute with certain ansatz than others.
The ensemble is utilized to take advantage of the various Hamiltonians and ground states.
Second, taking an ensemble average of the edge energy from different pairs of 1-local Pauli is akin to \textit{bagging}~\cite{breiman1996bagging} in machine learning.
Namely, even if an individual pair of 1-local Pauli is a weak predictor of the edge parity, their averages can give higher predictive performance when they are weakly correlated with each other. 
In our method, when optimization is performed on each of the $N$ different $(m,1)$-QRAC Hamiltonians created from different Pauli assignments, $N$ quantum states are obtained.
Since these quantum states are optimized with respect to different Hamiltonians as objective functions, they can be considered analogous to the weak predictors mentioned above.
Therefore, we expect an ensemble edge energy from these quantum states will yield a stronger predictor of the edge parity.
}

\paragraph{Line \red{14}: Compute the Maximum Spanning Tree}
After computing the ensemble edge energy $\mathfrak{E}_{jk}$, the parities of edges with nonzero $\mathfrak{E}_{jk}$ are determined according to the sign of $\mathfrak{E}_{jk}$.
However, sometimes inconsistency arises.
For example, this occurs when three negative edges form a cycle $\{(b_0, b_1), (b_1, b_2), (b_2, b_0)\}$, one of which is inconsistent because $(b_0\neq b_1)\land(b_1\neq b_2)\land(b_2\neq b_0)$ cannot hold.
To avoid this inconsistency, we adopt the maximum spanning tree to remove any cycles from the candidates.

Edges with nonzero $\mathfrak{E}_{jk}$ are collected and $|\mathfrak{E}_{jk}|$ are treated as edge weights.
For each subgraph $E_{\rm S}$ made by this collection that is disconnected from the others, the maximum spanning tree $E_{\rm T}$ can be calculated in linear time in the number of edges by Prim's algorithm~\cite{Prim1957} or by the randomized KKT algorithm~\cite{KargerKleinTarjan1995}. 
The obtained tree $E_{\rm T}$ can be easily converted into a rooted tree by setting an arbitrary node as the root.

\paragraph{Lines \red{15--18}: Modify the Graph}
Now, we have candidates $E_{\rm T}$ that can be used to determine the parity and can be deleted from graph $G'$ without inconsistency.
The parity for each edge in $E_{\rm T}$ is determined from the leaf to the root one by one according to its sign.
At the same time, the node on the leaf side is deleted and the edge weights are updated.
The graph modification procedure is the same as the RQAOA, as depicted in Fig.~\ref{fig:regraph}.

\begin{figure}[t]
    \centering
    \subfigure[]{
        \centering
        \includegraphics[scale=.85]{./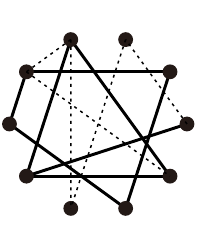}
        \label{fig:trees-a}
    }
    \subfigure[]{
        \centering
        \includegraphics[scale=.85]{./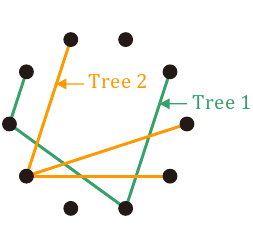}
        \label{fig:trees-b}
    }
    \subfigure[]{
        \centering
        \includegraphics[scale=.85]{./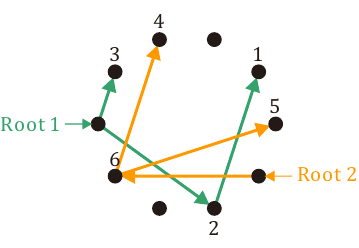}
        \label{fig:trees-c}
    }
    \caption{\red{Schematic illustration of tree construction and node deletion ordering: (a) Solid and dashed lines indicate edges with energy above and below the threshold, respectively; (b) Trees formed from the solid lines in (a); (c) Root nodes are randomly selected for each tree, and the nodes to be deleted are ordered from the leaf to the root.}}
  \label{fig:trees}
\end{figure}

\red{
Here, we note that the proposed method differs from the RQAOA in two parts: on which nodes and on how many nodes to be removed per iteration. 
In RQAOA, only one node at one end of the edge with the maximum absolute edge energy is removed per iteration, which is essentialy the only choice up to symmetricity of the ZZ interaction. 
In contrast, in the proposed method, any node that is an endpoint of an edge whose absolute edge energy exceeding the threshold is a candidate for removal per iteration (Fig.~\ref{fig:trees-a}).
To perform this multiple node removal without inconsistency, some edges that form a cycle are removed by calculating a spanning tree.
If an edge is not part of the spanning tree, it is ignored, even if its edge energy exceeds the threshold.
In the step where spanning trees are formed from the edges exceeding the threshold, a tree is created for each disconnected set of edges (Fig.~\ref{fig:trees-b}).
For each tree, a rooted tree is created by selecting a node at random as the root (Fig.~\ref{fig:trees-c}), and node removal is performed from the leaves toward the root.
}

\paragraph{Line \red{20}: Brute-force Search}
This is essentialy the same as in RQAOA~\cite{bravyi2020obstacles}:  the recursive procedure is stopped when the number of nodes of $G'$ is less than or equal to some constant $M$. In such case, the number of nodes whose parities are undetermined is at most $M$, and therefore the number of candidate solutions is at most $2^M$. The best solution ${\bf b}^*$ can be quickly obtained by a brute-force search. 

\paragraph{Example}

\begin{figure*}[t]
    \centering
    \subfigure[]{
        \centering
        \includegraphics[scale=.8]{./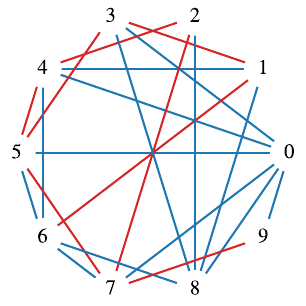}
        \label{fig:rqrao-example-a}
    }
    \subfigure[]{
        \centering
        \includegraphics[scale=.8]{./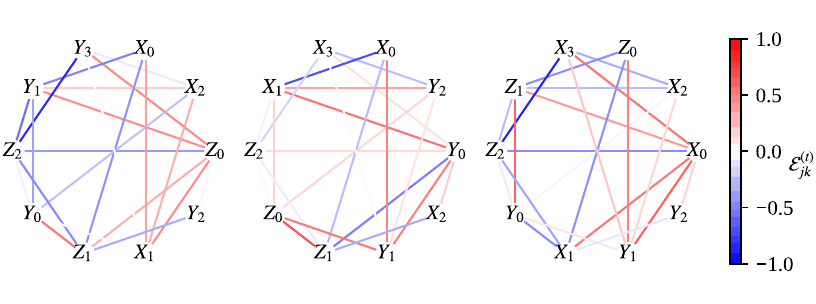}
        \label{fig:rqrao-example-b}
    }
    \subfigure[]{
        \centering
        \includegraphics[scale=.8]{./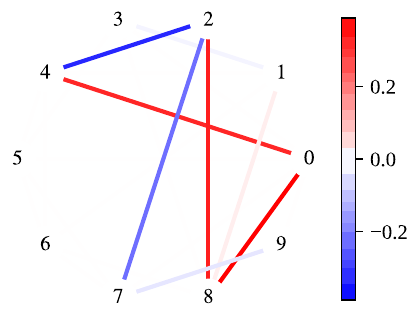}
        \label{fig:rqrao-example-c}
    }
    \subfigure[]{
        \centering
        \includegraphics[scale=.8]{./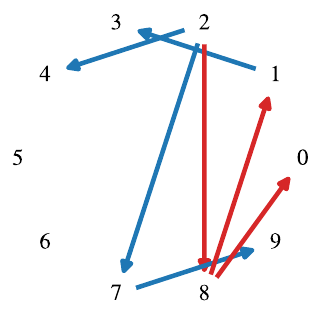}
        \label{fig:rqrao-example-d}
    }
    \subfigure[]{
        \centering
        \includegraphics[scale=.8]{./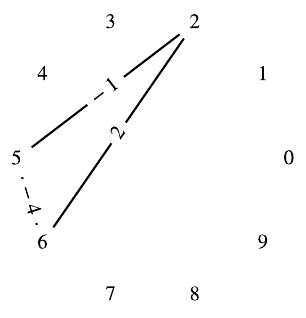}
        \label{fig:rqrao-example-e}
    }
    \caption{
    Schematic illustration of the RQRAO procedure. (a) Initial graph whose edge weights are $+1$ (red) or $-1$ (blue).
    (b) Three samples of edge energy, $\mathcal{E}^{(1)}_{jk}$, $\mathcal{E}^{(2)}_{jk}$, and $\mathcal{E}^{(3)}_{jk}$, are evaluated using optimized states whose corresponding Hamiltonians are $H^{(1)}_3$, $H^{(2)}_3$, and $H^{(3)}_3$, respectively. Node labels are assigned 1-local Paulis.
    (c) Ensemble edge energy $\mathfrak{E}_{jk}$ using the three samples illustrated in (b). Closed loop $0-4-2-8-0$ is inconsistent because $(b_0=b_4)\land(b_4\neq b_2)\land(b_2=b_8)\land(b_8=b_0)$ cannot be satisfied.
    (d) Maximum spanning tree with $|\mathfrak{E}_{jk}|$ edge weights. Red and blue indicate positive and negative edges, respectively. Edge $(0,4)$ is ignored at this step.
    (e) Modified graph after all leaf nodes of (d) have been removed and the edges in the tree have been reconnected to the remaining nodes. RQRAO repeats steps (a) to (e) until the number of nodes becomes smaller than $M$.
    }
  \label{fig:rqrao-example}
\end{figure*}

The whole RQRAO procedure is schematically shown in Fig.~\ref{fig:rqrao-example}.
The problem graph $G$ is shown in Fig.~\ref{fig:rqrao-example-a}, where the red and blue edges have weights $+1$ and $-1$, respectively.
The random Pauli assignments and optimization are carried out to obtain the edge energy for each trial $t$, $\mathcal{E}^{(t)}_{jk}$ (Fig.~\ref{fig:rqrao-example-b}).
The ensemble edge energy $\mathfrak{E}_{jk}$ is then calculated using $\{\mathcal{E}^{(t)}_{jk}\}^N_{t=1}$ (Fig.~\ref{fig:rqrao-example-c}).
The inconsistent edges are removed by solving the maximum spanning tree problem, where $|\mathfrak{E}_{jk}|$ are regarded as the edge weights.
The rooted tree is constructed by giving the direction from root to leaf, where the root node has been randomly selected (Fig.~\ref{fig:rqrao-example-d}). 
The edge parities in the rooted tree are determined from leaf to root according to the sign of corresponding $\mathfrak{E}_{jk}$.
At the same time, the graph is modified to satisfy the determined parity.
Figure~\ref{fig:rqrao-example-e} represents the modified graph after all edge parities in the rooted tree have been determined.
The protocol from Fig.~\ref{fig:rqrao-example-a} to \ref{fig:rqrao-example-e} is repeated until the number of nodes of the modified graph becomes at most $M$.

\subsection{Theoretical Analysis}
\label{sec:theoretical}

The proposed method, RQRAO, combines the existing RQAOA~\cite{bravyi2020obstacles} and QRAO~\cite{fuller2021approximate}.
Although the existing methods work well experimentally, there are interesting theoretical questions:
(i) Why does the recursive approach improve the cut weight performance in the RQAOA?
(ii) What is the meaning of maximizing the expectation value of the QRAC Hamiltonian, and what is the theoretical background of Pauli rounding in QRAO?
To answer these questions, we prove the following theorem:
\begin{theorem}
\label{theorem:objective}
The objective of the RQAOA, QRAO, and RQRAO can be unifiedly expressed as the problem of solving
\begin{equation}
\label{eq:prop2}
    \rho^*=\mathop{\rm argmax}_{\rho} \mathbb{E}_{{\bf b}\sim \mathcal{P}_m({\bf b};\rho)} \Big[\texttt{CW}({\bf b})\Big],
\end{equation}
\begin{equation}
\label{eq:prop1}
    {\bf b}^{\rm R}=\mathop{\rm argmax}_{{\bf b}\in\{0,1\}^{|V|}} \mathcal{P}_m({\bf b};\rho^*)\red{+\mathcal{P}_m(\overline{{\bf b}};\rho^*)},
\end{equation}
where \red{$\overline{\bf b}$ is the bit-flip of ${\bf b}$, and} $\mathcal{P}_m({\bf b};\rho)$ is the probability of obtaining ${\bf b}$ by $(m,1)$-random magic measurements whose value is proportional to the fidelity between $\rho$ and the $(m,1)$-magic state embedding $\mu({\bf b})$.
\end{theorem}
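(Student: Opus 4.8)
The plan is to prove the two claims in sequence, treating the probability $\mathcal{P}_m({\bf b};\rho)$ as the common object from which everything follows. First I would make the fidelity statement precise. The $(m,1)$-random magic measurement chooses, independently on each qubit, one of the $2^{m-1}$ antipodal pairs of magic states as a measurement basis and records the decoded bits. For a single qubit, obtaining the vertex $\mu^{[q]}_m(\{b^{[q]}_\bullet\})$ requires selecting the unique basis containing it (probability $1/2^{m-1}$) and then measuring it (probability $\mathrm{tr}(\mu^{[q]}_m\rho_q)$), so the per-qubit probability is $2^{-(m-1)}\mathrm{tr}(\mu^{[q]}_m\rho_q)$; taking the product over the $n$ qubits gives $\mathcal{P}_m({\bf b};\rho)=2^{-(m-1)n}\,\mathrm{tr}(\mu_m({\bf b})\rho)$, manifestly proportional to the fidelity $\mathrm{tr}(\mu_m({\bf b})\rho)$. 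The facts to verify here are that each pair is a genuine orthonormal basis (the magic Bloch vectors in Eq.~\eqref{eq:def-magic-state} have unit length, and flipping all $m$ bits sends a vertex to its antipode) and that $\sum_{\bf b}\mathcal{P}_m=1$, which holds because the two magic states of each basis sum to the identity.

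For Eq.~\eqref{eq:prop2} I would substitute $\texttt{CW}({\bf b})=\mathrm{tr}(H_m\mu_m({\bf b}))$ from \cite[Proposition 1]{fuller2021approximate} and use linearity to write the expected cut weight as $\mathbb{E}_{{\bf b}\sim\mathcal{P}_m}[\texttt{CW}({\bf b})]=\mathrm{tr}(H_m\,\mathcal{M}(\rho))$, where $\mathcal{M}(\rho):=\sum_{\bf b}\mathcal{P}_m({\bf b};\rho)\mu_m({\bf b})$ is the measure-and-reprepare channel of the random magic measurement. A short Bloch-sphere computation shows $\mathcal{M}$ acts on each qubit by contracting the Bloch components along the $m$ encoding Paulis by $1/m$ and annihilating the rest, so each two-local term $P_{\langle j\rangle}P_{\langle k\rangle}$ of $H_m$ (whose factors sit on distinct qubits by Eq.~\eqref{eq:constraint}) is rescaled by $1/m^2$. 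This yields the affine identity $\mathbb{E}[\texttt{CW}]=\tfrac{1}{m^2}\,\mathrm{tr}(H_m\rho)+\tfrac12\bigl(1-\tfrac1{m^2}\bigr)\sum_{(j,k)\in E}w_{jk}$. Since the slope $1/m^2$ is positive and the additive term is $\rho$-independent, the maximizer $\rho^*$ of the expected cut weight coincides with the maximizer of $\mathrm{tr}(H_m\rho)$, i.e. the state optimized in QRAO and, for $m=1$, in QAOA and RQAOA; this also supplies the interpretation of $\mathrm{tr}(H_m\rho)$ promised in Corollary~\ref{corollary:qrac-hamiltonian}.

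For Eq.~\eqref{eq:prop1} the proportionality from the first step reduces rounding to $\mathop{\rm argmax}_{\bf b}\,\mathrm{tr}\bigl((\mu_m({\bf b})+\mu_m(\overline{\bf b}))\rho^*\bigr)$, i.e. selecting the magic state of maximal globally-flip-symmetrized fidelity with $\rho^*$; the symmetrization by $\overline{\bf b}$ is forced by the invariance $\texttt{CW}({\bf b})=\texttt{CW}(\overline{\bf b})$ and matches the fact that RQAOA fixes only relative parities. I would then identify the existing rules as instances of this objective under a product-state assumption: if $\rho^*$ factorizes across qubits the fidelity factorizes, each factor $\tfrac12(1+\tfrac1{\sqrt m}\sum_{P}(-1)^{b_P}\mathrm{tr}(P\rho^*))$ is maximized by $b_P=0$ iff $\mathrm{tr}(P_{\langle j\rangle}\rho^*)>0$, which is exactly Pauli rounding (Eq.~\eqref{eq:pauli-rounding-org}); likewise the edge-parity rule that sets $b_j=b_k$ iff $\mathrm{tr}(P_{\langle j\rangle}P_{\langle k\rangle}\rho^*)>0$ is the greedy per-edge choice increasing the symmetrized probability, whose recursive application incrementally builds the most probable string---explaining why recursion raises the cut weight.

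The main obstacle I anticipate is not the optimization equivalence, which is the clean channel computation above, but making rigorous the final correspondence: Pauli rounding and recursive parity determination are greedy, product-state procedures, whereas Eq.~\eqref{eq:prop1} is a global argmax over a possibly entangled $\rho^*$. The careful part is therefore to state explicitly the product-state assumption under which they coincide, to track the $1/\sqrt m$ and $2^{-(m-1)}$ normalizations consistently across both reductions, and to verify that the recursive graph reduction preserves $\texttt{CW}({\bf b})=\mathrm{tr}(H_m\mu_m({\bf b}))$ on the contracted graphs so that both claims hold at every iteration.
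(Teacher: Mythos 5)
Your treatment of Eq.~\eqref{eq:prop2} is correct and is essentially the paper's own route. The explicit normalization $\mathcal{P}_m({\bf b};\rho)=2^{-(m-1)n}\,{\rm tr}(\mu_m({\bf b})\rho)$ agrees with the paper's implicit one (the paper normalizes by $\sum_{{\bf b}'}f_m({\bf b}',\rho)$ and then shows ${\rm tr}(\mathcal{E}(\rho))=1$, which forces $\sum_{{\bf b}'}f_m=2^{(m-1)n}$), and your measure-and-reprepare channel $\mathcal{M}$ with the per-qubit $1/m$ Bloch contraction is precisely the channel $\mathcal{E}$ together with the identity ${\rm tr}(P^{\otimes k}\mathcal{E}(\rho))=m^{-k}{\rm tr}(P^{\otimes k}\rho)$ borrowed from \cite{fuller2021approximate} that the paper uses to prove Theorem~\ref{theorem:fidelity} and Corollary~\ref{corollary:qrac-hamiltonian}; your affine identity matches Eq.~\eqref{eq:tr-qrac}. (One wording slip: for entangled $\rho$ the joint outcome probability is ${\rm tr}(\mu_m({\bf b})\rho)$ directly, not a product of per-qubit traces ${\rm tr}(\mu^{[q]}_m\rho_q)$; your displayed formula is nevertheless the correct one.)

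The genuine gap is in the Eq.~\eqref{eq:prop1} half, which is where RQAOA and RQRAO actually enter the unification. First, your product-state assumption on $\rho^*$ is not the paper's assumption: the paper assumes independence at the level of the \emph{distribution} $\mathcal{P}_m$ (Eq.~\eqref{eq:pauli-rounding-ass2}) together with the bit-flip assumption (Eq.~\eqref{eq:pauli-rounding-ass1}). Note that even for a product $\rho^*$ the per-qubit magic-measurement distribution does \emph{not} factorize over the $m$ bits sharing one qubit---the joint is affine in the signs $(-1)^{b_P}$ while the product of the marginals of Corollary~\ref{corollary:prob} contains cross terms---although your per-qubit argmax still lands on Pauli rounding thanks to that affine structure; you also never discharge the symmetrized argmax (why maximizing $\mathcal{P}_m({\bf b})+\mathcal{P}_m(\overline{\bf b})$ reduces to maximizing $\mathcal{P}_m({\bf b})$), which the paper isolates as an explicit assumption. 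Second, and more seriously, your account of the recursive methods---``greedy per-edge choice whose recursive application incrementally builds the most probable string''---is unsupported as stated: greedy maximization of conditionals does not in general produce a global argmax, and the paper bridges this with two explicit assumptions C, namely the interchange $\max_{\mathfrak{p}}\prod_b \mathcal{P}_m(\mathfrak{p}_{e_b}|\mathfrak{p}_{e_{<b}})=\prod_b\max_{\mathfrak{p}_{e_b}}\mathcal{P}_m(\mathfrak{p}_{e_b}|\mathfrak{p}_{e_{<b}})$ over the spanning-tree parity chain of Eq.~\eqref{eq:hypo-general-parity}, and crucially Eq.~\eqref{eq:assum-c1}: the conditional $\mathcal{P}_m(\mathfrak{p}_{e_b}|\mathfrak{p}_{e_{<b}};\rho^*,T)$ is approximated by $\mathcal{P}_m(\mathfrak{p}_{e_b};\rho^*_b(\mathfrak{p}_{e_{<b}}),T)$, i.e., the conditioning is imposed on the \emph{state} through graph contraction and re-optimization rather than on the distribution. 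This identification of the recursive graph reduction with approximate conditional sampling---together with the spanning-tree parameterization that excludes inconsistent parity assignments and makes ``most probable ${\bf b}$ up to global flip'' well-defined---is the central missing idea; without it your proposal establishes the QRAO/Pauli-rounding instance of the theorem but not the claim for RQAOA or RQRAO.
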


\red{Here, the superscript ``R'' stands for ``rounding'' and emphasizes that it represents the classical bits obtained by decoding the information embedded in the qubits.}
From Theorem~\ref{theorem:objective}, it is clarified that both the RQAOA and QRAO, let alone RQRAO, are models searching for the highest probability sample of the probability distribution that maximizes the expectation value of the cut weight through the quantum state as the hidden variable.
It will be proved later that maximizing the expectation value of the QRAC Hamiltonian, let alone that of the Ising-type Hamiltonian, can be seen as carrying out Eq.~\eqref{eq:prop2}.
In addition, it will also be proved later that both the recursive approach in the RQAOA and Pauli rounding in the QRAO are just methods that approximate Eq.~\eqref{eq:prop1}.
\red{
One of the contributions of this paper is a discovery that QRAO, RQAOA, and RQRAO, which at first glance seem unrelated in terms of their decoding processes, all decode bit strings according to Eq.~\eqref{eq:prop1}.
This allows us to consider a unified framework that encompasses all methods.
}

Surely, Theorem~\ref{theorem:objective} does not say the \maxcut problem is efficiently solvable because both maximizing the expected cut weight and obtaining the highest probability sample is, in general, intractable when ${\bf b}$ lives in an extremely high dimensional space.
However, we can say that it is natural to take the ${\bf b}$ with the highest probability because, after solving Eq.~\eqref{eq:prop2}, which maximizes the expected cut weight, the probability for larger $\texttt{CW}({\bf b})$ is expected to be large.
It is therefore reasonable to solve Eqs.~\eqref{eq:prop2} and \eqref{eq:prop1} to obtain a better cut weight.

In Sec.~\ref{sec:prop2}, Eq.~\eqref{eq:prop2} is proved, while Eq.~\eqref{eq:prop1} is proved in Sec.~\ref{sec:prop1}.
All the proofs of the theorems and corollaries in the following section are deferred to Supplementary Information~\ref{sec:proof}.

\subsubsection{Proof of Eq.~\eqref{eq:prop2}}
\label{sec:prop2}

The objectives of the RQAOA, QRAO, and RQRAO are to maximize ${\rm tr}(H_m\rho)$, where $m=1$ for the RQAOA whereas $m\in\{2,3\}$ for QRAO and RQRAO.
Hence, to prove Eq.~\eqref{eq:prop2}, we need to show that maximizing ${\rm tr}(H_m\rho)$ is equivalent to maximizing $\mathbb{E}_{{\bf b}\sim \mathcal{P}_m({\bf b};\rho)} \big[\texttt{CW}({\bf b})\big]$.
For this, let us start with the following theorems and corollaries, which are given for the first time in this paper. 

\begin{theorem}
\label{theorem:fidelity}
Let $P^{\otimes k}:=\prod^k_{i=1}P_{q^{(i)}}\in{\mathbb C}^{2^n\times 2^n}$ be the $k$-local Pauli observable, where $P_q$ represents $1$-local Pauli $P\in\mathfrak{P}\subset\{X,Y,Z\}$ acting on the $q$th qubit, and $q^{(i)}$ is the $i$th qubit index, where $q^{(i)}\neq q^{(j)}$ for $i\neq j$.
Suppose the case $|\mathfrak{P}|=m$.
Let $\rho\in{\mathbb C}^{2^n\times 2^n}$ be the arbitrary $n$ qubit state, $\mu_m({\bf b})\in{\mathbb C}^{2^n\times 2^n}$ be the $n$ qubit $(m,1)$-magic state defined in Eq.~\eqref{eq:product-state}, in which ${\bf b}\in\{0,1\}^{mn}$ is embedded, and $f_m({\bf b},\rho):={\rm tr}(\mu_m({\bf b})\rho)$ be the fidelity between $\mu_m({\bf b})$ and $\rho$.
The following relationship holds:
\begin{equation}
\label{eq:k-local-exp}
{\rm tr}\big( P^{\otimes k} \rho \big)
= m^{k/2} \, \mathbb{E}_{{\bf b}\sim \mathcal{P}_m({\bf b};\rho)}\Big[(-1)^{\sum^k_{i=1} b^{[q^{(i)}]}_{P}}\Big],
\end{equation}
where $b^{[q]}_{P}\in\{0,1\}$ is a binary mapped to the Pauli $P$ of the $q$th qubit and
\begin{equation}
\label{eq:pf}
\mathcal{P}_m({\bf b};\rho) := \frac{f_m({\bf b},\rho)}{\sum_{{\bf b}'\in\{0,1\}^{mn}}f_m({\bf b}',\rho)}.
\end{equation}
\end{theorem}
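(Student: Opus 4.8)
The plan is to observe that the right-hand side is, up to the normalizing constant in Eq.~\eqref{eq:pf}, a trace of $\rho$ against a \emph{signed sum} of magic states, and then to evaluate that sum and the normalization in closed form. Expanding the expectation gives
$$m^{k/2}\,\mathbb{E}_{{\bf b}\sim\mathcal{P}_m}\!\Big[(-1)^{\sum_i b^{[q^{(i)}]}_P}\Big]=\frac{m^{k/2}}{\mathcal{N}}\,{\rm tr}\!\left(\Big[\sum_{{\bf b}}(-1)^{\sum_i b^{[q^{(i)}]}_P}\mu_m({\bf b})\Big]\rho\right),$$
with $\mathcal{N}:=\sum_{{\bf b}'}f_m({\bf b}',\rho)={\rm tr}\big(\big[\sum_{{\bf b}'}\mu_m({\bf b}')\big]\rho\big)$. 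The statement then reduces to two operator identities: one evaluating $\sum_{{\bf b}'}\mu_m({\bf b}')$ and one evaluating the signed sum $\sum_{{\bf b}}(-1)^{\sum_i b^{[q^{(i)}]}_P}\mu_m({\bf b})$.

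Both sums factorize across the $n$ qubits, since $\mu_m({\bf b})$ is a tensor product (Eq.~\eqref{eq:product-state}) and the $m$ bits attached to distinct qubits are independent. First I would compute the single-qubit unsigned sum over the $2^m$ bit configurations $\sum_{\{b^{[q]}_\bullet\}}\mu^{[q]}_m$: in Eq.~\eqref{eq:def-magic-state} the identity term contributes $2^{m-1}I$, while each Pauli term picks up $\sum(-1)^{b^{[q]}_P}=0$ and vanishes. Hence $\sum_{{\bf b}'}\mu_m({\bf b}')=2^{(m-1)n}I$, and using ${\rm tr}(\rho)=1$ one obtains the $\rho$-independent constant $\mathcal{N}=2^{(m-1)n}$.

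For the signed sum the qubits outside $\{q^{(1)},\dots,q^{(k)}\}$ again each contribute $2^{m-1}I$, whereas a qubit $q^{(i)}$ carries the extra weight $(-1)^{b^{[q^{(i)}]}_P}$. The crucial single-qubit identity is
$$\sum_{\{b^{[q]}_\bullet\}}(-1)^{b^{[q]}_P}\mu^{[q]}_m=\frac{1}{2\sqrt{m}}\sum_{P'\in\mathfrak{P}}P'\sum_{\{b^{[q]}_\bullet\}}(-1)^{b^{[q]}_P+b^{[q]}_{P'}}=\frac{2^{m-1}}{\sqrt{m}}\,P,$$
where the identity term dies (being odd in $b^{[q]}_P$) and the orthogonality relation $\sum_{\{b^{[q]}_\bullet\}}(-1)^{b^{[q]}_P+b^{[q]}_{P'}}=2^m\delta_{P,P'}$ selects $P'=P$. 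Collecting the $k$ factors $2^{m-1}m^{-1/2}P$ with the $n-k$ factors $2^{m-1}I$ gives $\sum_{{\bf b}}(-1)^{\sum_i b^{[q^{(i)}]}_P}\mu_m({\bf b})=2^{(m-1)n}m^{-k/2}P^{\otimes k}$. Substituting this and $\mathcal{N}=2^{(m-1)n}$ into the displayed expression makes the $m^{k/2}$ and $2^{(m-1)n}$ factors cancel, leaving precisely ${\rm tr}(P^{\otimes k}\rho)$.

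I expect the difficulty to lie in bookkeeping rather than in any conceptual leap: the argument hinges on the orthogonality relation $\sum_{\{b^{[q]}_\bullet\}}(-1)^{b^{[q]}_P+b^{[q]}_{P'}}=2^m\delta_{P,P'}$, which requires the $m$ bits per qubit to be independent and the Paulis in $\mathfrak{P}$ to be distinct, and on tracking the powers of $2^{m-1}$ and $\sqrt{m}$ carefully so that the prefactors cancel cleanly. A minor point worth stating explicitly is that $\mathcal{N}$ is independent of $\rho$, which is what legitimizes normalizing the fidelities into a probability distribution $\mathcal{P}_m$ in the first place.
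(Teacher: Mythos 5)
Your proof is correct: the single-qubit computations check out (the unsigned sum over the $2^m$ bit configurations kills every Pauli term and leaves $2^{m-1}I$, hence $\mathcal{N}=2^{(m-1)n}$ independently of $\rho$; the signed sum kills the identity term and the orthogonality relation $\sum_{\{b^{[q]}_\bullet\}}(-1)^{b^{[q]}_P+b^{[q]}_{P'}}=2^m\delta_{P,P'}$ leaves $\tfrac{2^{m-1}}{\sqrt{m}}P$), and the tensor factorization and cancellation of prefactors are exactly as you state.

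Your route is, however, not the one the paper takes, and the difference is worth noting. The paper argues in the Schr\"odinger picture of the random magic measurement: it introduces the channel $\mathcal{E}(\rho)=2^{-(m-1)n}\sum_{\bf b}{\rm tr}(\mu_m({\bf b})\rho)\,\mu_m({\bf b})$ (Eq.~\eqref{eq:ms-channel}), \emph{cites} from \cite{fuller2021approximate} the contraction identity ${\rm tr}\big(P^{\otimes k}\mathcal{E}(\rho)\big)=m^{-k}\,{\rm tr}\big(P^{\otimes k}\rho\big)$ (Eq.~\eqref{eq:ms-rounding}), computes ${\rm tr}(P^{\otimes k}\mu_m({\bf b}))=m^{-k/2}(-1)^{\sum_i b^{[q^{(i)}]}_P}$, and obtains the normalization from trace preservation, $1={\rm tr}(\mathcal{E}(\rho))$. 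Your argument is the Heisenberg-dual of this, made self-contained: your operator identity
\begin{equation*}
\sum_{{\bf b}}(-1)^{\sum_i b^{[q^{(i)}]}_P}\mu_m({\bf b})=2^{(m-1)n}m^{-k/2}P^{\otimes k}
\end{equation*}
is precisely the adjoint of the cited contraction identity, and your resolution $\sum_{\bf b}\mu_m({\bf b})=2^{(m-1)n}I$ is precisely the statement that $\mathcal{E}$ is trace preserving. So the underlying algebra coincides, but you prove from first principles what the paper imports by citation; in effect you have reproved the key lemma of \cite{fuller2021approximate} along the way. What your version buys is self-containedness and an explicit view of where the hypotheses enter (the distinctness of the Paulis in $\mathfrak{P}$ and the independence of the $m$ bits per qubit are exactly what produce the $\delta_{P,P'}$). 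What the paper's version buys is brevity and the measurement-channel interpretation, which it reuses when relating random magic measurements to classical shadows in Supplementary Information~\ref{sec:shadow-magic}. One small caution: the theorem's notation suggests a single Pauli letter $P$ on all $k$ qubits, but the downstream use (Corollary~\ref{corollary:qrac-hamiltonian}) allows different letters on different qubits; your per-qubit factorization handles this verbatim, since each qubit's signed sum selects whichever Pauli is attached to it, so nothing in your argument needs to change.
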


Note that ${\bf b}$ following the probability $\mathcal{P}_m$ is the outcome of the $(m,1)$-random magic measurements, whose details are described in Supplementary Information~\ref{sec:shadow-magic}.

\begin{corollary}
\label{corollary:qrac-hamiltonian}
Let $\rho\in{\mathbb C}^{2^n\times 2^n}$ be the arbitrary $n$ qubit state and $H_m$ be the $(m,1)$-QRAC Hamiltonian defined as Eq.~\eqref{eq:h-qrac}.
The following relationship holds:
\[\displaystyle{{\rm tr}(H_m \, \rho)}\]
\[\displaystyle{
 = \sum_{(j,k)\in E} \frac{1-m^2\,\mathbb{E}_{{\bf b}\sim \mathcal{P}_m({\bf b};\rho)}\big[(-1)^{b_j+b_k}\big]}{2} w_{jk}
}\]
\begin{equation}
\label{eq:tr-qrac}
 = m^2\,\mathbb{E}_{{\bf b}\sim \mathcal{P}_m({\bf b};\rho)}\big[\texttt{CW}({\bf b})\big]
-\frac{m^2-1}{2}\sum_{(j,k)\in E}w_{jk},
\end{equation}
where $\mathcal{P}_m$ is defined as Eq.~\eqref{eq:pf}.
\end{corollary}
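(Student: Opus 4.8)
The plan is to derive the corollary directly from Theorem~\ref{theorem:fidelity} specialized to the case $k=2$, followed by elementary rearrangement. First I would expand ${\rm tr}(H_m\rho)$ by linearity of the trace. Writing $H_m=\sum_{(j,k)\in E}w_{jk}(I-mP_{\langle j\rangle}P_{\langle k\rangle})/2$ and using the normalization ${\rm tr}(\rho)=1$, this becomes $\sum_{(j,k)\in E}w_{jk}\big(1-m\,{\rm tr}(P_{\langle j\rangle}P_{\langle k\rangle}\rho)\big)/2$. The key structural point is that for every edge $(j,k)\in E$, the second constraint in Eq.~\eqref{eq:constraint} guarantees $q^{(j)}\neq q^{(k)}$, so $P_{\langle j\rangle}P_{\langle k\rangle}$ is a genuine $2$-local Pauli observable acting on two distinct qubits, which is exactly the object to which Theorem~\ref{theorem:fidelity} applies.

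Next I would apply Theorem~\ref{theorem:fidelity} with $k=2$ to each term. Since $m^{k/2}=m$ for $k=2$, and the binaries attached to the two factors are $b_j=b^{[q^{(j)}]}_{P^{(j)}}$ and $b_k=b^{[q^{(k)}]}_{P^{(k)}}$, this yields ${\rm tr}(P_{\langle j\rangle}P_{\langle k\rangle}\rho)=m\,\mathbb{E}_{{\bf b}\sim\mathcal{P}_m({\bf b};\rho)}\big[(-1)^{b_j+b_k}\big]$. Substituting back and collecting the factor $m\cdot m=m^2$ gives the first equality of Eq.~\eqref{eq:tr-qrac} at once.

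For the second equality I would simply insert the definition of $\texttt{CW}$ from Eq.~\eqref{eq:def-cut}. Abbreviating $e_{jk}:=\mathbb{E}_{{\bf b}\sim\mathcal{P}_m}\big[(-1)^{b_j+b_k}\big]$, linearity of expectation gives $\mathbb{E}_{{\bf b}}\big[\texttt{CW}({\bf b})\big]=\tfrac12\sum_{(j,k)\in E}w_{jk}(1-e_{jk})$, while the first equality reads ${\rm tr}(H_m\rho)=\tfrac12\sum_{(j,k)\in E}w_{jk}(1-m^2 e_{jk})$. Eliminating the common term $-\tfrac{m^2}{2}\sum w_{jk}e_{jk}$ between the two expressions and matching the constant pieces yields ${\rm tr}(H_m\rho)=m^2\,\mathbb{E}_{{\bf b}}[\texttt{CW}({\bf b})]-\tfrac{m^2-1}{2}\sum_{(j,k)\in E}w_{jk}$, the claimed identity.

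The main obstacle here is conceptual rather than computational: one must confirm that Theorem~\ref{theorem:fidelity} is applicable edge by edge, i.e. that the QRAC assignment never places the two endpoints of an edge on the same qubit; otherwise $P_{\langle j\rangle}P_{\langle k\rangle}$ would be $1$-local (or proportional to the identity) and the $k=2$ formula would fail. Beyond this check the derivation is routine algebra, and because the expectation is taken with respect to the single distribution $\mathcal{P}_m({\bf b};\rho)$ for all edges simultaneously, no independence or factorization assumption on the individual $b_j$ is required.
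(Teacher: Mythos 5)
Your proposal is correct and takes essentially the same route as the paper's proof: specialize Theorem~\ref{theorem:fidelity} to $k=2$ (giving ${\rm tr}(P_{\langle j\rangle}P_{\langle k\rangle}\rho)=m\,\mathbb{E}_{{\bf b}\sim\mathcal{P}_m({\bf b};\rho)}[(-1)^{b_j+b_k}]$), substitute into ${\rm tr}(H_m\rho)$ using ${\rm tr}(\rho)=1$, and then rewrite the parity sum via the definition of $\texttt{CW}$ in Eq.~\eqref{eq:def-cut}. Your explicit verification that the assignment constraint Eq.~\eqref{eq:constraint} forces $q^{(j)}\neq q^{(k)}$ on every edge, so the $k=2$ case genuinely applies, is a point the paper leaves implicit, but the argument is otherwise identical.
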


According to Eq.~\eqref{eq:tr-qrac}, the expectation value of the $(m,1)$-QRAC Hamiltonian is proportional to the expectation value of the cut weight, where the expectation is over the bit strings obtained from $(m,1)$-random magic measurements.
Hence, maximizing the expectation value of the $(m,1)$-QRAC Hamiltonian can be regarded as maximizing the expected cut weight.
Note that the lower bounds of $\mathbb{E}_{{\bf b}\sim \mathcal{P}_m({\bf b};\rho)}\big[\texttt{CW}({\bf b})\big]$ for $w_{jk}=1$ when ${\rm tr}(H_m\rho)\ge\texttt{CW}({\bf b}^*)$ have been given in \cite[Theorem 2 and Supplementary Information VII]{fuller2021approximate}.
They are $\mathbb{E}_{{\bf b}\sim \mathcal{P}_m({\bf b};\rho)}\big[\texttt{CW}({\bf b})\big] \ge \frac{5}{9}\texttt{CW}({\bf b}^*)$ for $m=3$ and $\mathbb{E}_{{\bf b}\sim \mathcal{P}_m({\bf b};\rho)}\big[\texttt{CW}({\bf b})\big] \ge \frac{5}{8}\texttt{CW}({\bf b}^*)$ for $m=2$.
These bounds are reproduced by substituting ${\rm tr}(H_m\rho)\ge\texttt{CW}({\bf b}^*)$ and $\sum_{(j,k)\in E}w_{jk}\ge\texttt{CW}({\bf b}^*)$ into Eq.~\eqref{eq:tr-qrac}.

Note that not only the expectation value but also the variance of the cut weight can be obtained using the $(m,1)$-QRAC Hamiltonian as follows.
\begin{corollary}
\label{corollary:qrac-variance}
Let $\rho\in{\mathbb C}^{2^n\times 2^n}$ be the arbitrary $n$ qubit state and $H_m$ be the $(m,1)$-QRAC Hamiltonian defined as Eq.~\eqref{eq:h-qrac}.
The variance of the cut weight is
\[\displaystyle{
{\rm Var}_{{\bf b}\sim \mathcal{P}_m({\bf b};\rho)}\big[\texttt{CW}({\bf b})\big]
}\]
\begin{equation}
\label{eq:qrac-variance}
=\frac{1}{m^4}\Big({\rm tr}(H_m^2\rho)-{\rm tr}(H_m\rho)^2\Big),
\end{equation}
where $\mathcal{P}_m$ is defined as Eq.~\eqref{eq:pf}.
\end{corollary}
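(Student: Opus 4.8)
The plan is to mirror the proof of Corollary~\ref{corollary:qrac-hamiltonian} one order higher, reducing the variance to a second-moment computation and feeding it through Theorem~\ref{theorem:fidelity}. Writing ${\rm Var}_{{\bf b}\sim\mathcal{P}_m}[\texttt{CW}({\bf b})] = \mathbb{E}_{{\bf b}\sim\mathcal{P}_m}[\texttt{CW}({\bf b})^2] - \big(\mathbb{E}_{{\bf b}\sim\mathcal{P}_m}[\texttt{CW}({\bf b})]\big)^2$, the mean term is already controlled: Corollary~\ref{corollary:qrac-hamiltonian} expresses $\mathbb{E}_{{\bf b}\sim\mathcal{P}_m}[\texttt{CW}({\bf b})]$ as an affine function of ${\rm tr}(H_m\rho)$, so $\big(\mathbb{E}[\texttt{CW}]\big)^2$ supplies the ${\rm tr}(H_m\rho)^2$ piece together with the additive constant $\frac{m^2-1}{2}\sum_{(j,k)\in E}w_{jk}$, which must cancel at the end. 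The entire content therefore sits in showing that $\frac{1}{m^4}{\rm tr}(H_m^2\rho)$ reproduces $\mathbb{E}[\texttt{CW}^2]$ up to exactly the cross terms that complete the square.

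For this I would expand both sides over ordered pairs of edges. On the classical side, using $\texttt{CW}({\bf b}) = \frac12\sum_{(j,k)\in E} w_{jk}\big(1-(-1)^{b_j+b_k}\big)$, the square is a double sum over $\mathbb{E}_{{\bf b}\sim\mathcal{P}_m}\big[(-1)^{b_j+b_k}(-1)^{b_{j'}+b_{k'}}\big]$. On the quantum side, $H_m^2 = \sum_{(j,k),(j',k')} w_{jk}w_{j'k'}\,O_{jk}O_{j'k'}$ with $O_{jk} = \frac12\big(I - mP_{\langle j\rangle}P_{\langle k\rangle}\big)$, and each product $O_{jk}O_{j'k'}$ expands into $I$, the single strings $P_{\langle j\rangle}P_{\langle k\rangle}$ and $P_{\langle j'\rangle}P_{\langle k'\rangle}$, and the four-factor string $P_{\langle j\rangle}P_{\langle k\rangle}P_{\langle j'\rangle}P_{\langle k'\rangle}$. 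The single-string terms are handled by the $k=2$ case of Theorem~\ref{theorem:fidelity} exactly as in Corollary~\ref{corollary:qrac-hamiltonian}, so the task reduces to matching ${\rm tr}\big(P_{\langle j\rangle}P_{\langle k\rangle}P_{\langle j'\rangle}P_{\langle k'\rangle}\rho\big)$ against $m^2\,\mathbb{E}_{{\bf b}\sim\mathcal{P}_m}\big[(-1)^{b_j+b_k+b_{j'}+b_{k'}}\big]$, which is precisely the $k=4$ instance of Theorem~\ref{theorem:fidelity} whenever the four nodes sit on four distinct qubits.

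The main obstacle is the bookkeeping when the two edges overlap, since then the four 1-local factors no longer act on distinct qubits and Theorem~\ref{theorem:fidelity} does not apply verbatim. Two sub-cases arise: a repeated node contributes $P_{\langle\cdot\rangle}^2 = I$, collapsing that qubit and lowering the effective locality, so that the power of $m$ drops from $m^2$ toward $m$ or $1$; whereas two distinct nodes assigned the same qubit but different Paulis produce the third Pauli times an imaginary phase (e.g. $XZ=-iY$), which is anti-Hermitian and hence outside the scope of Theorem~\ref{theorem:fidelity}. My plan for the latter is to symmetrize the ordered double sum, replacing $P_{\langle j\rangle}P_{\langle k\rangle}P_{\langle j'\rangle}P_{\langle k'\rangle}$ by $\frac12\{P_{\langle j\rangle}P_{\langle k\rangle},\,P_{\langle j'\rangle}P_{\langle k'\rangle}\}$: anticommuting pairs drop out and the imaginary contributions cancel, leaving only Hermitian strings on which Theorem~\ref{theorem:fidelity} can be invoked. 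The delicate part is verifying that, after this reduction, the differing powers of $m$ coming from the diagonal ($(j,k)=(j',k')$) and the shared-qubit terms recombine into the single global factor $m^4$ claimed in Eq.~\eqref{eq:qrac-variance}; this is where I expect to spend most of the effort, tracking each such term against its classical counterpart $\mathbb{E}[(-1)^{b_j+b_k+b_{j'}+b_{k'}}]$.

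Finally, I would assemble the pieces: substituting the matched second moment and the affine mean relation of Corollary~\ref{corollary:qrac-hamiltonian} into $\frac{1}{m^4}\big({\rm tr}(H_m^2\rho) - {\rm tr}(H_m\rho)^2\big)$, the additive constants cancel and the expression collapses to $\mathbb{E}[\texttt{CW}^2] - \mathbb{E}[\texttt{CW}]^2 = {\rm Var}[\texttt{CW}]$, establishing Eq.~\eqref{eq:qrac-variance}. As a sanity check that fixes the normalization and tests whether the uniform $m^4$ scaling really survives the diagonal and shared-qubit contributions, I would first carry out the single-edge case on a product magic state before attempting the general graph.
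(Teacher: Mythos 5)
Your reduction and your diagnosis of the obstacle are exactly right --- and in fact more careful than the paper's own proof, which expands ${\rm tr}(H_m^2\rho)$ into the double edge sum and then invokes Theorem~\ref{theorem:fidelity} wholesale on every four-Pauli product, never separating out the diagonal, shared-node, or shared-qubit pairs you single out. But the step you defer (``verifying that the differing powers of $m$ recombine into the single global factor $m^4$'') is not bookkeeping that works out: it fails, and with it the corollary as stated, for $m\in\{2,3\}$. On the diagonal $(j,k)=(j',k')$ the operator product collapses to $I$, so ${\rm tr}\big(P_{\langle j\rangle}P_{\langle k\rangle}P_{\langle j'\rangle}P_{\langle k'\rangle}\rho\big)=1$, while the classical counterpart $\mathbb{E}\big[(-1)^{2(b_j+b_k)}\big]$ is also $1$; the uniform matching, however, requires the classical term to equal $m^{-2}$ times the trace, which is off by a factor of $m^2$. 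Likewise, for two edges sharing one node the product collapses to a $2$-local string, and Theorem~\ref{theorem:fidelity} with $k=2$ supplies a factor $m$, not $m^2$. (Your symmetrization does correctly dispose of the anti-Hermitian shared-qubit terms: there both the anticommutator and the corresponding classical expectation vanish, so those pairs are consistent.) The residual mismatch is $\frac14\big(1-m^{-2}\big)\sum_{(j,k)\in E}w_{jk}^2$ from the diagonal, plus analogous $\big(m^{-1}-m^{-2}\big)$-weighted corrections from adjacent edge pairs; a correct statement is Eq.~\eqref{eq:qrac-variance} \emph{plus} these overlap terms, and the identity as printed is exact only for $m=1$.

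Your own proposed sanity check settles this concretely. Take a single edge with $w=1$, $m=3$, $P_{\langle 0\rangle}=Z_0$, $P_{\langle 1\rangle}=Z_1$, and $\rho=\mu_3({\bf 0})$ the all-zeros product magic state, so $\mathcal{E}:={\rm tr}(Z_0Z_1\rho)=1/3$. Then $\texttt{CW}$ is a Bernoulli variable with $p=\mathcal{P}(b_0\neq b_1)=\frac12-\frac{\mathcal{E}}{2m}=\frac49$ by Corollary~\ref{corollary:prob}, giving ${\rm Var}\big[\texttt{CW}\big]=\frac{20}{81}$; but ${\rm tr}(H_3^2\rho)=\frac{5-3\mathcal{E}}{2}=2$ and ${\rm tr}(H_3\rho)=\frac{1-3\mathcal{E}}{2}=0$, so the right-hand side of Eq.~\eqref{eq:qrac-variance} is $\frac{2}{81}$ --- short by exactly the diagonal correction $\frac14\big(1-\frac19\big)=\frac{18}{81}$. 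In short: your plan is the paper's plan executed honestly, and the ``delicate part'' you flagged is a genuine gap that cannot be closed; carried to completion your calculation yields a corrected variance formula with edge-overlap terms rather than a proof of the corollary, whose stated form (and the paper's one-line application of Theorem~\ref{theorem:fidelity} inside its proof) does not survive your single-edge test.
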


\subsubsection{Proof of Eq.~\eqref{eq:prop1}}
\label{sec:prop1}

Because the cut weight is invariant to the bit flip, we can fix one of the \red{$b_{j}$} to $0$ or $1$.
In this case, there exists a one-to-one relationship between $\mathcal{P}_m({\bf b};\rho^*)\red{+\mathcal{P}_m(\overline{\bf b};\rho^*)}$ and
\begin{equation}
\label{eq:hypo-general-parity}
{\mathcal{P}}_{m}(\symbffrak{p};\rho^*,T)= 
\prod^{|V|-1}_{b=1} \mathcal{P}_m(\mathfrak{p}_{e_b}|\symbffrak{p}_{e_{<b}};\rho^*,T),
\end{equation}
where 
\begin{equation}
\mathfrak{p}_{e=(c,p)}
:=\left\{
\begin{array}{ll}
+ & (b_c=b_p) \\[2mm]
- & (b_c\neq b_p)
\end{array}
\right.
\end{equation}
is a random variable of an edge parity, where ``$+$'' and ``$-$'' indicate positive and negative parities, respectively, $T=\{e_{j_b}=(c_{j_b},p_{j_b})\}^{|V|-1}_{b=1}$ is an arbitrary spanning tree of given graph $G$, and $\symbffrak{p}_{e_{<b}}:=\{\mathfrak{p}_{e_{1}},\ldots,\mathfrak{p}_{e_{b-1}}\}$.
\red{Note that ${\bf b}^{\rm R}$ can be easily recovered using $T$ and $\mathfrak{p}^{\rm R}(T)$.
}
Here, the reason for considering the probability of the edge parity on the tree is to exclude inconsistent parities such as a triangle with negative parities.
Excluding such an inconsistency does not lose the generality because all possible configurations of solutions can be treated by considering only the probability on the tree.
\red{
Using the edge parity expression, Eq.~\eqref{eq:prop1} can be rewritten as
\begin{equation}
\label{eq:prop1-parity}
\begin{array}{ll}
    \symbffrak{p}^{\rm R}(T)&\displaystyle{=\mathop{\rm argmax}_{\symbffrak{p}\in\{+,-\}^{|V|-1}} \mathcal{P}_m(\symbffrak{p};\rho^*,T)}\\
    &\displaystyle{=\mathop{\rm argmax}_{\symbffrak{p}\in\{+,-\}^{|V|-1}}\prod^{|V|-1}_{b=1} \mathcal{P}_m(\mathfrak{p}_{e_b}|\symbffrak{p}_{e_{<b}};\rho^*,T).}
\end{array}
\end{equation}
}

In the following, we show that QRAO follows Eq.~\red{\eqref{eq:prop1}}, whereas both the RQAOA and RQRAO follow Eq.~\red{\eqref{eq:hypo-general-parity}} to obtain the solution.
All of them assume a moderate approximation on a probability distribution.
The following corollary is frequently used for the proof.

\begin{corollary}
\label{corollary:prob}
Define $\mathcal{E}_{j}:={\rm tr}(P_{\langle j\rangle}\,\rho)$ and $\mathcal{E}_{jk}:={\rm tr}(P_{\langle j\rangle}P_{\langle k\rangle}\,\rho)$.
When the bit string ${\bf b}=b_{|V|-1}\ldots b_1b_0$ is obtained by the $(m,1)$-random magic measurements, ${\bf b}\sim\mathcal{P}_m({\bf b};\rho)$, the probabilities of $b_j=0$ and $b_j=1$ are
\begin{equation}
\label{eq:prob-bj}
\begin{array}{lll}
\mathcal{P}(b_j=0)&=&\displaystyle{\frac{1}{2}+\frac{1}{2\sqrt{m}}\mathcal{E}_{j}}\\[5mm]
\mathcal{P}(b_j=1)&=&\displaystyle{\frac{1}{2}-\frac{1}{2\sqrt{m}}\mathcal{E}_{j}}
\end{array},
\end{equation}
respectively, whereas the probability of $b_j=b_k$ and $b_j\neq b_k$ are
\begin{equation}
\label{eq:prob-bjbk}
\begin{array}{lll}
\mathcal{P}(b_j=b_k)
&=&\left\{
\begin{array}{ll}
\displaystyle{\frac{1}{2}+\frac{1}{2m}\mathcal{E}_{j}\mathcal{E}_{k}} & (q^{(j)}=q^{(k)})\\[5mm]
\displaystyle{\frac{1}{2}+\frac{1}{2m}\mathcal{E}_{jk}} & (q^{(j)}\neq q^{(k)})
\end{array}
\right.\\[10mm]
\mathcal{P}(b_j\neq b_k)
&=&\left\{
\begin{array}{ll}
\displaystyle{\frac{1}{2}-\frac{1}{2m}\mathcal{E}_{j}\mathcal{E}_{k}} & (q^{(j)}=q^{(k)})\\[5mm]
\displaystyle{\frac{1}{2}-\frac{1}{2m}\mathcal{E}_{jk}} & (q^{(j)}\neq q^{(k)})
\end{array}
\right.
\end{array}
,
\end{equation}
respectively, where $q^{(j)}$ is a qubit with $b_j$ embedded.
\end{corollary}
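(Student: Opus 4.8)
The plan is to drive everything from Theorem~\ref{theorem:fidelity} (Eq.~\eqref{eq:k-local-exp}) together with one elementary bookkeeping identity: for any $\{0,1\}$-valued quantity $s$ we have $\mathbb{E}[(-1)^{s}]=\mathcal{P}(s=0)-\mathcal{P}(s=1)$, so that $\mathcal{P}(s=0)=\tfrac12\big(1+\mathbb{E}[(-1)^{s}]\big)$ once $\mathcal{P}(s=0)+\mathcal{P}(s=1)=1$ is imposed. Each probability in the statement is then obtained by identifying the correct Pauli expectation, plugging it into Eq.~\eqref{eq:k-local-exp}, and solving a $2\times2$ linear system.

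First I would handle the single-bit marginals and the distinct-qubit pair, both of which follow \emph{exactly} from Eq.~\eqref{eq:k-local-exp}. The one-Pauli instance gives ${\rm tr}(P_{\langle j\rangle}\rho)=\sqrt{m}\,\mathbb{E}_{{\bf b}\sim\mathcal{P}_m}\big[(-1)^{b_j}\big]$, i.e.\ $\mathcal{E}_{j}=\sqrt{m}\,\big(\mathcal{P}(b_j=0)-\mathcal{P}(b_j=1)\big)$, and the bookkeeping identity returns Eq.~\eqref{eq:prob-bj}. When $q^{(j)}\neq q^{(k)}$ the product $P_{\langle j\rangle}P_{\langle k\rangle}$ is genuinely a two-qubit observable of the form assumed in Theorem~\ref{theorem:fidelity}, so the two-Pauli instance gives $\mathcal{E}_{jk}=m\,\mathbb{E}_{{\bf b}\sim\mathcal{P}_m}\big[(-1)^{b_j+b_k}\big]=m\,\big(\mathcal{P}(b_j=b_k)-\mathcal{P}(b_j\neq b_k)\big)$; solving against $\mathcal{P}(b_j=b_k)+\mathcal{P}(b_j\neq b_k)=1$ yields the $q^{(j)}\neq q^{(k)}$ line of Eq.~\eqref{eq:prob-bjbk}.

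The hard part will be the same-qubit pair $q^{(j)}=q^{(k)}$, where Theorem~\ref{theorem:fidelity} does not apply: its hypothesis demands distinct qubit indices, and indeed $P_{\langle j\rangle}P_{\langle k\rangle}$ now reduces to a single-qubit operator proportional to the remaining Pauli (e.g.\ $X_qZ_q=-iY_q$), so ${\rm tr}(P_{\langle j\rangle}P_{\langle k\rangle}\rho)$ is not the quantity governing $b_j$ versus $b_k$. The correlation must instead be supplied by the structure of the $(m,1)$-random magic measurement itself. My route is to argue that the two bits carried by a single qubit are rounded \emph{independently}, so their joint law factorizes into the single-bit marginals already computed; multiplying $\mathcal{P}(b_j=0)\mathcal{P}(b_k=0)+\mathcal{P}(b_j=1)\mathcal{P}(b_k=1)$ collapses to $\tfrac12+\tfrac{1}{2m}\mathcal{E}_j\mathcal{E}_k$, and $\mathcal{P}(b_j\neq b_k)=1-\mathcal{P}(b_j=b_k)$ then finishes the statement. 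The subtle point I would flag is that this factorization is exactly where a modeling assumption enters: a naive marginalization of the fidelity-proportional $\mathcal{P}_m$ over the untouched same-qubit bit cancels the cross term and leaves $\tfrac12$, so recovering the stated $\tfrac{1}{2m}\mathcal{E}_j\mathcal{E}_k$ correction requires pinning down, from the precise definition of the random magic measurement, that same-qubit bits are decoded independently. Establishing that independence rigorously is the step on which the whole same-qubit formula rests, and it is presumably the ``moderate approximation'' the paper attributes to Pauli rounding.
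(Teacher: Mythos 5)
Your treatment of the single-bit marginals and of the distinct-qubit pair is exactly the paper's proof: apply Theorem~\ref{theorem:fidelity} with $k=1$ and $k=2$ respectively, use $\mathbb{E}[(-1)^{s}]=\mathcal{P}(s=0)-\mathcal{P}(s=1)$ together with normalization, and solve for the probabilities. The only divergence is the same-qubit case, and there your caution is better founded than you may realize. The paper's proof of that branch consists of a single line, $\mathcal{P}(b_j=b_k)=\mathcal{P}(b_j=0)\,\mathcal{P}(b_k=0)+\mathcal{P}(b_j=1)\,\mathcal{P}(b_k=1)$, written as though it were an identity --- precisely the factorization you propose and then flag as the step needing justification. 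So your proposal reproduces the paper's argument in full; the paper supplies no argument for the step you were unable to justify.

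Moreover, your suspicion that this factorization cannot be extracted from the single-round measurement is correct, and your ``naive marginalization leaves $\tfrac12$'' computation is in fact the exact single-round answer. Under Definition~\ref{def:magic-meas}, conditional on the random basis choice $i$ all bits on one qubit are deterministic functions of a single binary outcome; checking Table~\ref{tab:outcome-bits}, for any pair of Paulis exactly two of the four bases force $b_j=b_k$ and the other two force $b_j\neq b_k$ (e.g.\ $i\in\{1,3\}$ give $b_X=b_Z$ while $i\in\{2,4\}$ give $b_X\neq b_Z$), so averaging over the uniform basis choice yields $\mathcal{P}(b_j=b_k)=\tfrac12$ exactly, for every $\rho$, with no $\mathcal{E}_j\mathcal{E}_k$ correction. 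The stated same-qubit line of Eq.~\eqref{eq:prob-bjbk} is therefore not a property of the joint law $\mathcal{P}_m$ from one measurement round; it is the law obtained when $b_j$ and $b_k$ are decoded in independent repetitions of the measurement, or equivalently when the joint is postulated to be the product of the marginals --- which is how Pauli rounding implicitly treats the bits, consistent with the independence assumptions invoked later in the paper. Under that reading, the product computation you sketch, $\bigl(\tfrac12+\tfrac{\mathcal{E}_j}{2\sqrt{m}}\bigr)\bigl(\tfrac12+\tfrac{\mathcal{E}_k}{2\sqrt{m}}\bigr)+\bigl(\tfrac12-\tfrac{\mathcal{E}_j}{2\sqrt{m}}\bigr)\bigl(\tfrac12-\tfrac{\mathcal{E}_k}{2\sqrt{m}}\bigr)=\tfrac12+\tfrac{1}{2m}\mathcal{E}_j\mathcal{E}_k$, closes the proof immediately; so state that interpretation explicitly rather than trying to prove an independence that the single-round protocol demonstrably violates.
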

Equation~\eqref{eq:prob-bj} is consistent with the upper bound of the decoding probability, $\frac{1}{2}+\frac{1}{2\sqrt{m}}$~\cite{ambainis1999dense}, which is recovered when $|\mathcal{E}_j|=1$.

\paragraph{Pauli rounding}
\red{
Let us consider two assumptions: bit-flip and independent ones, which we refer to as {\it assumptions A}, denoted by the superscript A.
The first one is the bit-flip assumption
\[\displaystyle{
\mathop{{\rm argmax}}_{{\bf b}} \mathcal{P}^{\rm A}_m({\bf b};\rho^*)
}\]
\begin{equation}
\label{eq:pauli-rounding-ass1}
=\mathop{{\rm argmax}}_{{\bf b}} \mathcal{P}^{\rm A}_m({\bf b};\rho^*) + \mathcal{P}^{\rm A}_m(\overline{\bf b};\rho^*).
\end{equation}
In many cases, this assumption is practically satisfied for \maxcut. This is because the ground states of the \maxcut Hamiltonians are degenerate; ${\bf b}$ and $\overline{\bf b}$ represent the same cut. Moreover, unless a specific constraint is intentionally imposed to balance $\mathcal{P}_m({\bf b};\rho)$ and $\mathcal{P}_m(\overline{\bf b};\rho)$, optimization can result in one of $\mathcal{P}_m({\bf b};\rho)$ or $\mathcal{P}_m(\overline{\bf b};\rho)$ becoming large while the other becomes close to zero.
Alternatively, in the case of an ansatz with $\mathbb{Z}_2$ symmetry~\cite{bravyi2020obstacles}, such as QAOA, the bit-flipped solution appears with the same probability, so the assumption is also satisfied in this case.}

The second one is the independent assumption that all $b_{j_a}$ for $a=1,2,\ldots,|V|$ are independent \red{where $\{j_a\}^{|V|}_{a=1}=\{1,...,|V|\}$, which is equivalent to write the joint distribution as
\begin{equation}
\label{eq:pauli-rounding-ass2}
{\mathcal{P}}^{\rm A}_m({\bf b};\rho^*) = \prod^{|V|}_{a=1}\mathcal{P}^{\rm A}_m(b_{j_a};\rho^*).
\end{equation}
}
Under \red{these} assumption\red{s}, Eq.~\red{\eqref{eq:prop1}} is approximated as
\begin{equation}
\label{eq:pauli-rounding}
\begin{array}{lll}
    b^{\red{{\rm A}}}_{j_a} &=& \displaystyle{\mathop{\rm argmax}_{b_{j_a}\in\{0,1\}} \mathcal{P}^{\red{{\rm A}}}_m(b_{j_a};\rho^*)} \\[5mm]
    &=&\displaystyle{\mathop{\rm argmax}_{b_{j_a}\in\{0,1\}} \Bigg(\frac{1}{2}+\frac{(-1)^{b_{j_a}}}{2\sqrt{m}}\mathcal{E}_{j_a}\Bigg)}\\[5mm]
    &=&\left\{
    \begin{array}{ll}
    0 & (\mathcal{E}_{j_a}>0)\\
    1 & (\mathcal{E}_{j_a}<0)
    \end{array}
    \right.,
\end{array}
\end{equation}
where the second equality uses Corollary~\ref{corollary:prob}.
Interestingly, ${\bf b}^{\red{{\rm A}}}$ is equivalent to the Pauli rounding of QRAO~\cite{fuller2021approximate} \red{defined in Eq.~\eqref{eq:pauli-rounding-org}}.
This comes out that the Pauli rounding seeks the variable with the highest probability of ${\mathcal{P}}_m({\bf b};\rho)$ under the assumption\red{s} of Eq\red{s}.~\red{\eqref{eq:pauli-rounding-ass1} and \eqref{eq:pauli-rounding-ass2}}.

\paragraph{Tree rounding}
Assume all $\mathfrak{p}_{e_b}$ for $b=1,2,\ldots,|V|-1$ are independent\red{, which we refer to as {\it assumption B}, denoted by the superscript B}.
Under this assumption, Eq.~\eqref{eq:hypo-general-parity} is approximated as
\begin{equation}
\label{eq:ass-tree}
{\mathcal{P}}^{\red{{\rm B}}}_{m}(\symbffrak{p};\rho^*,T)
=\displaystyle{\prod_{b=1}^{|V|-1} \mathcal{P}^{\red{{\rm B}}}_m(\mathfrak{p}_{e_b};\rho^*,T)},\\[5mm]
\end{equation}
\red{and Eq.~\red{\eqref{eq:prop1-parity}} is approximated as}
\[\displaystyle{
}\]
\begin{equation}
\label{eq:tree-rounding}
\begin{array}{lll}
    \mathfrak{p}^{\red{{\rm B}}}_{e_b=(c,p)}\red{(T)}
    &=&\displaystyle{\mathop{\rm argmax}_{\mathfrak{p}_{e_b}\in\{+,-\}} \mathcal{P}^{\red{{\rm B}}}_m(\mathfrak{p}_{e_b};\rho^*,\red{T})}\\[5mm]
    &=&\displaystyle{\mathop{\rm argmax}_{(b_c,b_p)\in\{0,1\}^2} 
    \Bigg(\frac{1}{2}+\frac{(-1)^{b_{c}+b_{p}}}{2m}\mathcal{E}_{cp}\Bigg)}\\[5mm]
    &=&\left\{
    \begin{array}{ll}
    + & (\mathcal{E}_{cp}>0)\\
    - & (\mathcal{E}_{cp}<0)
    \end{array}
    \right.,
\end{array}
\end{equation}
where the second equality uses Corollary~\ref{corollary:prob}.
We refer to this rounding method as {\it tree rounding}.

RQRAO with $N=1$, which has no recursive steps, is equivalent to using tree rounding.
\red{
When $N = 1$, the ensemble edge energy $\mathfrak{E}_{jk}$ in Eq.~\eqref{eq:def-energy} is equal to the edge energy estimate $\mathcal{E}_{jk}$ in Eq.~\eqref{eq:def-energy-t} from a single evaluation, and almost surely it takes a non-zero value.
Since the spanning tree is constructed by gathering edges with $\mathfrak{E}_{jk} \neq 0$, a spanning tree covering all nodes can be formed with high probability.
By progressively removing nodes from the leaves to the root of this spanning tree as illustrated in Fig.~\ref{fig:trees}, all nodes except the root are eliminated, determining all parities at once.
This is the same rounding method referred to as the tree rounding defined above.
Since both Pauli rounding and tree rounding assume that all random variables are independent, it is expected that QRAO and RQRAO with $N=1$ will yield similar cut weights.
}

\paragraph{Recursive rounding}
\red{
Let us impose the following two assumptions, which we refer to as \textit{assumptions C}, denoted by the superscript C:
\[\displaystyle{
\max_{\symbffrak{p}\in\{+,-\}^{|V|-1}}\prod^{|V|-1}_{b=1}\mathcal{P}^{\rm C}_m(\mathfrak{p}_{e_b}|\symbffrak{p}_{<e_{b}};\rho^*,T)
}\]
\begin{equation}
=\prod^{|V|-1}_{b=1}\max_{\mathfrak{p}_{e_b}\in\{+,-\}}\mathcal{P}^{\rm C}_m(\mathfrak{p}_{e_b}|\symbffrak{p}_{<e_{b}};\rho^*,T),
\end{equation}
\begin{equation}
\label{eq:assum-c1}
\mathcal{P}^{\rm C}_m(\mathfrak{p}_{e_b}|\symbffrak{p}_{<e_{b}};\rho^*,T)=\mathcal{P}^{\rm C}_m(\mathfrak{p}_{e_b};\rho^*_b(\symbffrak{p}_{<e_{b}}),T),
\end{equation}
where $\rho^*_b(\symbffrak{p}_{<e_{b}})$ is the optimized quantum state that involves the information of $\symbffrak{p}_{<e_{b}}$.
Under these assumptions, Eq.~\eqref{eq:prop1-parity} can be approximated as
\[\displaystyle{
\mathfrak{p}^{\rm C}_{e_b}(T)= \mathop{\rm argmax}_{\mathfrak{p}_{e_b}}\mathcal{P}^{\red{{\rm C}}}_m(\mathfrak{p}_{e_b};\rho^*_b(\symbffrak{p}_{<e_{b}}))
}\]
\begin{equation}
\label{eq:assum-c2}
\ \ \ {\rm for} \ \ \ b=1,\ldots,|V|-1.
\end{equation}
}

\red{
Compared to the Pauli and tree roundings, which impose the strong approximation on $\mathcal{P}_m$ that all random variables are independent, assumptions in Eqs.~\eqref{eq:assum-c1} and \eqref{eq:assum-c2} are weak because each random variable is no longer independent.
Although the assumption of independence in Pauli rounding and tree rounding was made due to the difficulty of conditional sampling, this approach addresses the problem by imposing conditions $\symbffrak{p}_{<e_b}=\symbffrak{p}^*_{<e_b}$ on the quantum state itself instead of on $\mathcal{P}_m$, where $\symbffrak{p}^*_{<e_b}$ is the determined edge parities before step $b$. 
Of course, this approach is not exactly equivalent to the original one, but imposing the condition as a constraint in the optimization should provide a good approximation to conditional sampling.
In the following, we specifically demonstrate how the method of recursively decoding bit strings, as adopted in Recursive QAOA~\cite{bravyi2020obstacles}, follows Eq.~\eqref{eq:prop1-parity} to obtain ${\symbffrak{p}^{\rm R}(T)}$ under the assumptions in Eqs.~\eqref{eq:assum-c1} and \eqref{eq:assum-c2}.
}

First, one parity of $e_1$ is determined using the optimization result, $\rho^*_1$ as
\begin{equation}
\mathfrak{p}^*_{e_1}= \mathop{\rm argmax}_{\mathfrak{p}_{e_1}\in\{+,-\}} \mathcal{P}_m(\mathfrak{p}_{e_1};\rho^*_1),
\end{equation}
where
\begin{equation}
e_1= \mathop{\rm argmax}_{e} \max_{\mathfrak{p}_{e}\in\{+,-\}} \mathcal{P}_m(\mathfrak{p}_{e};\rho^*_1).
\end{equation}
According to Corollary~\ref{corollary:prob}, this can be rewritten as
\begin{equation}
\mathfrak{p}^{\rm rec}_{e_1=(c_1,p_1)}= \left\{
\begin{array}{ll}
+ & (\mathcal{E}_{c_1p_1}>0) \\
- & (\mathcal{E}_{c_1p_1}<0)
\end{array}
\right. ,
\end{equation}
where
\begin{equation}
\begin{array}{lll}
(c_1,p_1)
&=&\displaystyle{\mathop{\rm argmax}_{(c,p)\in E} \Bigg(\frac{1}{2}+\frac{1}{2m}\big|\mathcal{E}_{cp}\big|\Bigg)}\\[5mm]
&=&\displaystyle{\mathop{\rm argmax}_{(c,p)\in E} \big|\mathcal{E}_{cp}\big|}.
\end{array}
\end{equation}
After that, the given graph $G$ is modified to satisfy $\mathfrak{p}_{e_1}=\mathfrak{p}^*_{e_1}$, which refers to $G_2$.
Next, the second optimization is carried out with the $(m,1)$-QRAC Hamiltonian based on $G_2$, and the optimized state $\rho^*_2$ is obtained.
Because the modified graph is forced to be $\mathfrak{p}_{e_1}=\mathfrak{p}^*_{e_1}$, it can be seen that $\rho^*_2$ is conditioned on $\mathfrak{p}_{e_1}=\mathfrak{p}^*_{e_1}$.
Hence, the determined parity of the second step can be written as
\begin{equation}
\label{eq:rec1}
\mathfrak{p}^*_{e_2}= \mathop{\rm argmax}_{\mathfrak{p}_{e_2}\in\{+,-\}}\mathcal{P}_m(\mathfrak{p}_{e_2};\rho^*_2(\mathfrak{p}_{e_1}=\mathfrak{p}^*_{e_1}))
\end{equation}
where
\begin{equation}
\label{eq:rec2}
e_2= \mathop{\rm argmax}_{e} \max_{\mathfrak{p}_{e}\in\{+,-\}} \mathcal{P}_m(\mathfrak{p}_{e};\rho^*_2(\mathfrak{p}_{e_1}=\mathfrak{p}^*_{e_1})).
\end{equation}
Of course, this could also be rewritten using the edge energy $\mathcal{E}_{cp}$, but we will leave that out as it is self-explanatory.
Repeating this procedure gives
\[\displaystyle{
\mathfrak{p}^*_{e_b}= \mathop{\rm argmax}_{\mathfrak{p}_{e_b}}\mathcal{P}_m(\mathfrak{p}_{e_b};\rho^*_b(\symbffrak{p}_{<e_{b}}=\symbffrak{p}^*_{<e_{b}}))
}\]
\begin{equation}
\label{eq:argmax-parity-approx}
\ \ \ {\rm for} \ \ \ b=1,\ldots,|V|-1.
\end{equation}
Finally, $\symbffrak{p}^*=\{\mathfrak{p}^*_1,\ldots,\mathfrak{p}^*_{|V|-1}\}$ is obtained.
The corresponding bit string ${\bf b}^{\rm \red{C}}$ can be easily recovered from $\symbffrak{p}^*$ by fixing one of the $b^{\rm \red{C}}_j$ to $0$ or $1$.
The condition $\symbffrak{p}_{<e_b}=\symbffrak{p}^*_{<e_b}$ is indirectly imposed through $\rho^*_b$ to $\mathcal{P}_m$, and hence this rounding method imposes an approximation that is weaker than that of Pauli and tree rounding.
We refer to this rounding method as {\it recursive rounding}.

In recursive rounding, the edge that determines the parity is selected using the absolute value of its edge energy (Eq.~\eqref{eq:rec2}), and its parity is determined by the sign of the corresponding edge energy (Eq.~\eqref{eq:rec1}).
This process is the same as explained in Sec.~\ref{sec:rqaoa} and hence, it can be said that the RQAOA adopts recursive rounding.
The main idea behind RQAOA is to approximate $\mathop{\rm max}_{{\bf b}\in\{0,1\}^{|V|}}{\rm tr}(H_1|{\bf b}\rangle\langle{\bf b}|)$~\cite{bravyi2020obstacles}, but in actuality, it approximates $\mathop{\rm max}_{{\bf b}\in\{0,1\}^{|V|}}\mathcal{P}_1({\bf b};\rho^*)=\mathop{\rm max}_{{\bf b}\in\{0,1\}^{|V|}}{\rm tr}(\rho^* |{\bf b}\rangle\langle{\bf b}|
)$.\footnote{
Since $\mu_1({\bf b})=|{\bf b}\rangle\langle{\bf b}|$, $f_1({\bf b},\rho)={\rm tr}(\mu_1({\bf b})\rho)$, and $\sum_{{\bf b}\in\{0,1\}^{|V|}}f_1({\bf b},\rho)=1$, $\mathcal{P}_1({\bf b};\rho)=f_1({\bf b},\rho)/\sum_{{\bf b}'}f_1({\bf b}',\rho)={\rm tr}(|{\bf b}\rangle\langle{\bf b}|\rho)$.
}
By contrast, the rounding for RQRAO with $N>1$ can be considered to be an intermediate approach between tree rounding and recursive rounding because the edge parities that are being determined simultaneously in one step are assumed to be independent.

\section{Numerical Experiments}
\label{sec:experiments}

In this section, we show the results of the numerical experiments of the \maxcut problem using RQRAO and other existing algorithms.
The main claim is that RQRAO achieves a better cut weight than existing algorithms and is comparable to the state-of-the-art classical heuristic.

\paragraph{Datasets}
We used three graph datasets.

The first dataset is used to show the validity of Theorem~\ref{theorem:objective}.
That is, it shows that the bit string with the highest probability yields a good cut weight after maximizing the expectation of the $(m,1)$-QRAC Hamiltonian.
Only one graph instance is generated by a machine-independent graph generator \texttt{rudy}~\cite{rudy} using \texttt{-rnd\_graph 14 50 0 -random 0 1 0 -times 2 -plus -1}, where the number of nodes is $14$, the graph density is $0.5$, and the edge weights are set to $\pm1$ uniformly at random.
This instance has only one \maxcut solution, ${\bf b}=00001101101010$, except for its bit-flip solution, where the maximum cut weight is 12.
We call this graph instance \texttt{Rnd14}.

The second dataset is the benchmark graph dataset Gset~\cite{helmberg2000spectral} (\url{http://www.stanford.edu/~yyye/yyye/Gset/}).
Gset includes random, toroidal, and almost planar graphs with $|V|=800$ to $20000$ indexed by G1 to G81, which are generated by \texttt{rudy}.
Only G1, G6, \red{G11,} G14, and G18 are used for evaluation, all of which have 800 nodes.
G1 and G6 are random graphs, G11 is a toroidal graph, and G14 and G18 are the combined graphs of two planar graphs.
Graphs G1 and G6, and graphs G14 and G18 have the same edge structure except for the weights, where G1 and G14 have $+1$ edge weights whereas G6 and G18 have edge weights of $\pm 1$ assigned at random.
The edge weights of G11 are $\pm1$ assigned at random.
We call this graph dataset \texttt{Gset800}.

The third dataset consists of 3-regular graphs with $\pm1$ edge weights generated at random.
The number of nodes varies from $10$ to $10^4$.
Ten graphs are generated for each number of nodes.
These graphs are used to evaluate the scalability with respect to the cut weight and the runtime of the algorithms.
We call this graph dataset \texttt{Rnd3R}.

\paragraph{Baselines}
We compared our model with GW~\cite{goemans1995improved}, 
CirCut~\cite{burer2002rank}, QRAO~\cite{fuller2021approximate}, and RQAOA~\cite{bravyi2020obstacles}.
GW is a well-known classical approximation algorithm based on semidefinite programming.
CirCut is the Fortran program name of the rank-two relaxation algorithm; it is a classical heuristic algorithm that empirically outperforms other classical heuristics on various graph instances~\cite{dunning2018works}.
The details of GW and CirCut are described in \red{Supplementary Information}~\ref{sec:classical-maxcut}.
The RQAOA is a QAOA-based algorithm whose details are explained in Sec.~\ref{sec:rqaoa}.
QRAO is a quantum relaxation-based optimization method whose details are explained in Sec.~\ref{sec:qrao}.

\paragraph{Implementation Details}
The matrix product state (MPS) ansatz is used for $|\psi({\boldsymbol\theta})\rangle$, where all the matrices in the MPS are treated as trainable parameters (see Supplementary Information~\ref{sec:mps} for details).
The default hyperparameters for RQRAO used in the numerical experiments are listed in Table~\ref{tab:hyps}.

We implemented our model in Python, and the MPS optimization is implemented using PyTorch version 1.10.0~\cite{paszke2019pytorch}.
For the optimizer, we used L-BFGS~\cite{liu1989limited} with line search using strong Wolfe conditions.
The termination tolerance for changing the function value and parameters is set to $10^{-2}$.
Implementation details for GW, CirCut, the RQAOA, and QRAO are described in Supplementary Information~\ref{sec:implement-details}.

All experiments were carried out on a single Intel(R) Core(TM) i9-9900X CPU @ 3.50GHz with 32GB (8GBx4) DDR4-2666 Quad-Channel.

\begin{table}[tb]
\centering
\caption{Default hyperparameters for RQRAO.}
\label{tab:hyps}
\begin{tabular}{lcc}
\toprule
Hyperparameter & Symbol & Value \\
\midrule
\# embeddings / qubits & $m$ & $3$ \\
\# ensemble & $N$ & $20$ \\
Scale factor & $S$ & $2$ \\
Bond dimension & $\chi$ & $2$ \\
Brute-force search threshold & $M$ & $10$ \\
\bottomrule
\end{tabular}
\end{table}

\paragraph{Details of the Experiments}

Using the \texttt{Rnd14} dataset, Theorem~\ref{theorem:objective} was verified.
First, $(m,1)$-QRAC Hamiltonians were randomly generated 100 times each for $m=1, 2,$ and $3$.
For each Hamiltonian, the eigenstate with the maximum eigenvalue and optimized state with bond dimension $\chi=2$ were then calculated.
Using each state,
the cut weight with the highest probability was calculated.
We call this experiment \texttt{ExpVer}.

Using the \texttt{Gset800} dataset, the procedure was run 10 times on the same graph instance and the best cut weight of the trials was recorded.
RQRAO and all baseline methods were used.
We call this experiment \texttt{ExpGset}.

Using the \texttt{Rnd3R} dataset, the procedure was run once for each graph instance and the cut weight and elapsed time were recorded.
The cut weight is normalized with respect to that of the GW algorithm and is referred to as the relative cut weight.
Using \texttt{Rnd3R}, two experiments were carried out.
The first one evaluated the scalability.
RQRAO and all baseline methods were used.
We call this experiment \texttt{ExpScale}.
The second one analyzed the sensitivity of the RQRAO hyperparameters.
Only one hyperparameter of RQRAO was varied from its default value (Table~\ref{tab:hyps}).
We call this experiment \texttt{ExpHyp}.

\subsection{Results and Discussion}

\paragraph{Verification of Theorem~\ref{theorem:objective} (\texttt{ExpVer})}

\begin{figure}[t]
    \centering
    \subfigure[MaxEig]{
        \centering
        \includegraphics[scale=1]{./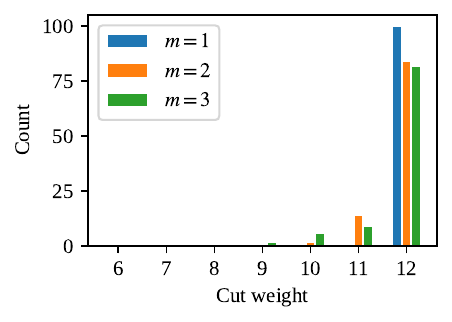}
        \label{fig:largest-maxeig}
    }
    \subfigure[$\chi=2$]{
        \centering
        \includegraphics[scale=1]{./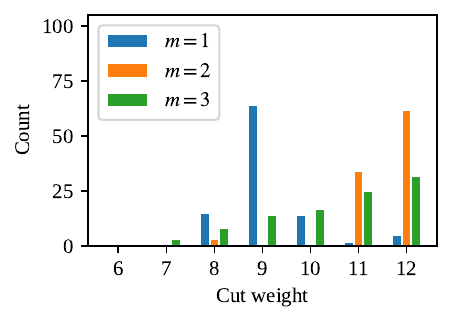}
        \label{fig:largest-chi2}
    }
    \caption{
    Solutions with the highest probability of 100 trials for the same graph instance, \texttt{Rnd14}, where the state used was (a) the eigenstate corresponding to the maximum eigenvalue and (b) the state obtained through optimization with $\chi=2$.
    }
  \label{fig:expver}
\end{figure}

\begin{table*}[tb]
\centering
\caption{Comparison with existing methods on \maxcut problems from the Gset. All instances have $|V|=800$. 
All best-known values were reported in \cite{benlic2013breakout} with the breakout local search algorithm.
GW: Goemans--Williamson algorithm~\cite{goemans1995improved},
RQAOA: recursive quantum approximate optimization algorithm~\cite{bravyi2020obstacles},
QRAO: quantum random access optimization~\cite{fuller2021approximate},
RQRAO: recursive quantum random access optimization (proposed).}
\label{tab:res-gset}
\begin{tabular}{lrrrrrrrrr}
\toprule
Instance & Best-known & GW & CirCut & QRAO & RQAOA & ${\rm RQRAO}_{N=1}$ & ${\rm RQRAO}_{N=20}$\\
\midrule
G1 & 11624 & 11467 & 11622 & 11453 & \red{11460} & 11466 & 11562 \\
G6 &  2178 & 2013 & 2178 & 1980 & \red{1821} & 1980 & 2148 \\
G11 & 564 & 536 & 556 & 540 & \red{560} & 544 & 564 \\
G14 & 3064 & 2999 & 3052 & 2971 & 2965 & 2965 & 3043 \\
G18 & 992 & 924 & 987 & 929 & 882 & 920 & 980 \\
\bottomrule
\end{tabular}
\end{table*}
\begin{figure*}[t]
    \centering
    \subfigure[]{
        \centering
        \includegraphics[scale=.8]{./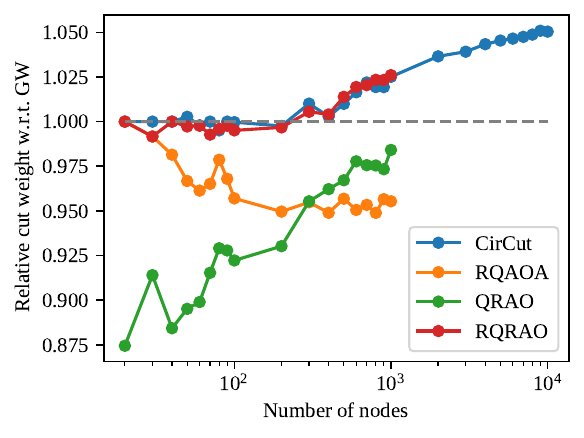}
        \label{fig:scalability-cut}
    }
    \subfigure[]{
        \centering
        \includegraphics[scale=.8]{./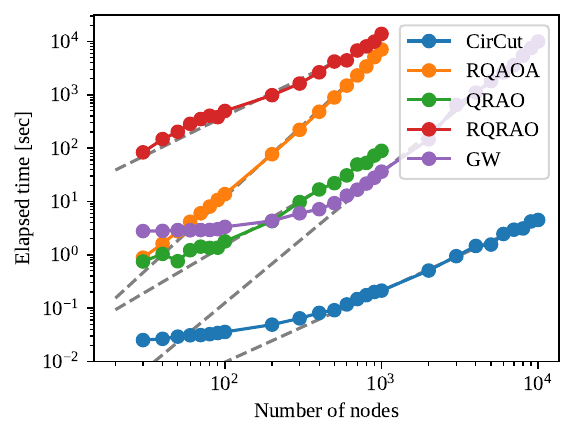}
        \label{fig:scalability-time}
    }
    \caption{Comparison of scalability with existing methods. Ten 3-regular graphs with the same number of nodes were randomly generated whose edge weights were $\pm1$ at random. 
    The markers indicate the means.
    (a) Relative cut weight with respect to the cut weight obtained by the GW algorithm; (b) Algorithm run time. The gray dashed line is a fitted curve using $({\rm Elapsed\ time\ [sec]})=a({\rm Number\ of\ nodes})^b$}.
  \label{fig:scalability}
\end{figure*}

\begin{figure*}[t]
    \subfigure[]{
        \centering
        \includegraphics[scale=.8]{./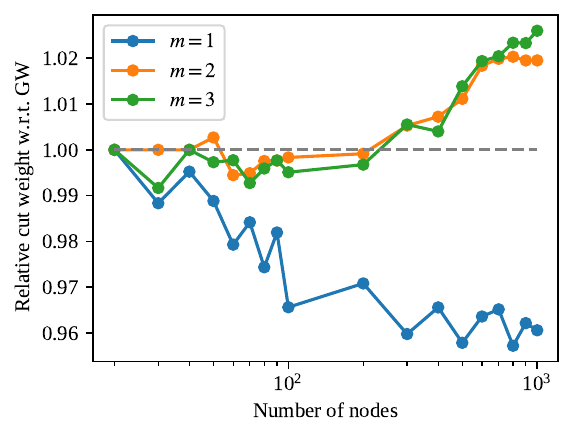}
        \label{fig:scalability-cut-m}
    }
    \subfigure[]{
        \centering
        \includegraphics[scale=.8]{./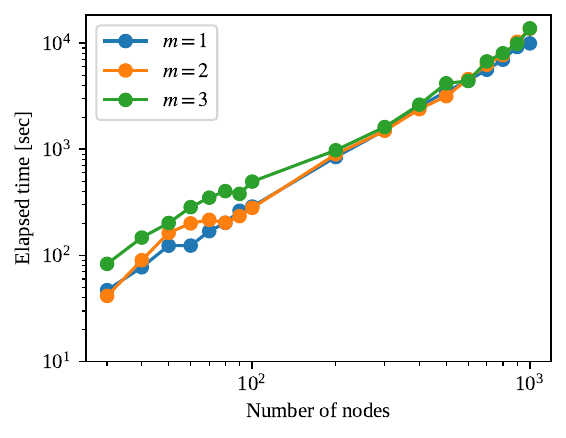}
        \label{fig:scalability-time-m}
    }
    \caption{Results of the ablation study for the number of embeddings per qubit $m$. The problem graph instances are $3$-regular graphs with $\pm1$ edge weights. For each number of nodes, $10$ graph instances were randomly generated.
    }
  \label{fig:ablation}
\end{figure*}

\red{
Figure~\ref{fig:expver} summarizes the results of \texttt{ExpVer}.
When the eigenstate with the maximum eigenvalue is obtained, the $(1,1)$-QRAC is consistently the best choice as it provides the optimal \maxcut solution with the highest probability (Fig.~\ref{fig:largest-maxeig}).
On the other hand, for $(2,1)$ and $(3,1)$-QRACs, particularly when higher cut weight solutions cluster in a different location within the Hilbert space from the \maxcut solution, the solution with the highest probability may not correspond to the optimal \maxcut solution.
Therefore, even if the ground state is obtained, there is a disadvantage that the nearest solution may not be the optimal \maxcut solution.}

\red{
Nevertheless, when the quantum state is obtained through optimization (Fig.~\ref{fig:largest-chi2}), $(2,1)$ and $(3,1)$-QRACs achieve higher cut weights than when using $(1,1)$-QRAC.
The reason why higher cut weights can be obtained by setting $m\ge2$ 
can be explained as follows.
As shown in Fig.~\ref{fig:rqrao-concept}, as $m$ increases, more solutions are packed into \red{fewer qubits}, resulting in an average over a greater number of solutions being observed.
In other words, as $m$ increases, we can only observe the average of a larger number of solutions, leading to a \textit{blurred} expectation value.
Since the optimization process attempts to reach the optimal state from such a blurred function value that only reflects the average, increasing $m$ does not necessarily bring the optimized state closer to the nearest state embedding the optimal solution.
However, this blurring could also potentially make the function value smoother, thus simplifying the optimization process.
Such optimization-related advantages occur for $m=2$ and $m=3$, and it is likely that $m=2$ resulted in the highest cut weight because it strikes a balance between the aforementioned advantages and disadvantages.
}

\red{
In this study, we have optimized a classically trackable MPS ansatz using classical optimization methods.
However, recent developments, such as quantum state preparation using Lindbladians~\cite{ding2024single}, suggest that \textit{pure} quantum algorithms are promising.
In our method, more than 90\% of the computation time is spent on parameter optimization, so there could be benefits in terms of overall algorithm speedup by replacing the optimization process with a pure quantum algorithm.
However, since the discussion on optimization is beyond the scope of this study, these considerations will be addressed in future work.
In any case, the \maxcut problem, traditionally a problem of finding the eigenstate with the maximum eigenvalue, has effectively been shifted to finding the magic state embedding with the highest fidelity to the optimized quantum state.
While challenging, this approach may yield better solutions, particularly with $(2,1)$ and $(3,1)$-QRACs.
In the following, we will explore the advantages of $m \ge 2$ through additional experiments.
}

\paragraph{Gset (\texttt{ExpGset})}
The cut weight on the Gset dataset for each algorithm is summarized in Table~\ref{tab:res-gset}.
RQRAO with $N=20$ exceeds both the classical approximation algorithm (GW) and quantum algorithms (QRAO and the RQAOA), but is slightly worse than the classical heuristic algorithm (CirCut) on all graphs except for G11.

As expected in Sec.~\ref{sec:prop1}, the cut weight of RQRAO with $N=1$ is very similar to that of QRAO because both assume all random variables are independent.
The cut weight performance of RQRAO increases when the number of ensembles $N$ is increased, which confirms the effectiveness of using an ensemble.

\paragraph{Scalability (\texttt{ExpScale})}

Figure~\ref{fig:scalability} summarizes the results on $3$-regular graphs.
Although the relative \red{cut weights} of QRAO and the RQAOA are lower than 1, that of RQRAO exceeds 1 when the number of nodes is larger than 200 and is larger than that of CirCut (Fig.~\ref{fig:scalability-cut}).
In addition, although the absolute runtime of RQRAO is longer than that of CirCut, the order of empirical time complexity of these methods is close (Fig.~\ref{fig:scalability-time}).
To quantitatively compare the empirical time complexity of the methods, the runtime curves were fitted using 
\[\displaystyle{
({\rm Wall\ time\ [sec]})=\alpha ({\rm Number\ of\ nodes})^{\beta}
}\]
using only data with a large number of nodes.
The fitted values of $\beta$ are $1.34$ for CirCut, $\red{2.71}$ for the RQAOA, $1.73$ for QRAO, $1.44$ for RQRAO, and $2.44$ for GW.
That is, the empirical time complexity of the proposed method is much lower than that of GW and the RQAOA, and comparable to that of CirCut.
Because most of the computational time is spent on optimizing the variational states, the computational time of the proposed method can be reduced by improving the optimization algorithm.
For example, the runtime cost of RQRAO can be reduced by parallel computation in the ensemble part.
Further reductions in the computational costs of RQRAO are future work.
\red{
In Supplementary Information~\ref{sec:2d-toric}, we also compared the methods using the 2D toric graphs with one extra node connected to all other nodes, as used in~\cite{bravyi2020obstacles} to demonstrate the power of RQAOA.
The results reaffirm what we have observed when comparing the methods using 3-regular graphs.
}

\paragraph{Sensitivity Analysis (\texttt{ExpHyp})}

Only the results for varying the number of embeddings per qubit $m$ are described here.
Other hyperparameter sensitivities \red{such as the number of ensemble $N$, the scale factor $S$, and the bond dimension $\chi$,} are summarized in Supplemental Information~\ref{sec:ablation-others}.
In Fig.~\ref{fig:scalability-cut-m}, we can see the clear advantage of using $m>1$ over $m=1$; only the cut weight in the cases of $m>1$ overcomes that of GW.
When $m=1$, the optimized state may be stacked in \red{bad} local optima, as shown in Fig.~\ref{fig:largest-chi2}, and this is the reason for the lower cut weight.
Almost no runtime dependency on $m$ is seen in Fig.~\ref{fig:scalability-time-m}, which means there is no disadvantage to use $m>1$ as long as the given graph is sufficiently sparse to allow applying $(m,1)$-QRAC Hamiltonian.

\section{Conclusion}
\label{sec:conclusion}
In this study, we proposed a recursive and quantum-relaxed algorithm to solve the \maxcut problem.
The quantum relaxation is achieved using the QRAC, which is incorporated into the existing recursive \maxcut algorithm.
To show the validity of the proposed algorithm, new properties of the quantum-relaxed Hamiltonian were revealed.
According to these findings, we showed that the \maxcut problem can be approximately reduced to the problem finding the variable with the highest probability, which is determined by the quantum state optimized through the quantum-relaxed Hamiltonian.
We confirmed that the proposed algorithm is consistent with this objective and includes the existing methods as special cases.
The numerical experiments on graphs with several hundred nodes showed that the proposed algorithm outperforms the GW algorithm, which is the best-known classical approximation algorithm, on graphs with more than 300 nodes both in the approximation ratio and the empirical time complexity.

\section*{Acknowledgement}
\label{sec:acknowledgement}
This work is supported by MEXT Quantum Leap Flagship Program Grant Number JPMXS0118067285 and JPMXS0120319794, and JSPS KAKENHI Grant Number 20H05966. 
We would like to thank Yohichi Suzuki and Michihiko Sugawara of Keio Quantum Computing Center and Hiroshi C. Watanabe of Kyushu University for discussing the theoretical and experimental justification of RQRAO.
RR would like to acknowledge Prof. Hiroshi Imai for information on Gset instances and \maxcut. 

\bibliographystyle{unsrtnat} 
\bibliography{09_reference}

\onecolumn\newpage

\appendix
\renewcommand{\thesubsection}{\Alph{section}.\arabic{subsection}}

\setcounter{equation}{0}
\renewcommand{\theequation}{S.\arabic{equation}}
\setcounter{figure}{0}
\renewcommand{\thefigure}{S.\arabic{figure}}

\makeatletter
\renewcommand\p@subfigure{\thefigure}
\makeatother

\section{\red{Random Magic Measurements and the Relationship with Classical Shadows}}
\label{sec:shadow-magic}

Let us define $(m,1)$-random magic measurements, which is referred to as {\it magic state rounding} in \cite{fuller2021approximate}.
In this section, only $(3,1)$-random magic measurements are introduced, as $(2,1)$- and $(1,1)$-random magic measurements are easily derived similarly.

\begin{definition}[Random magic measurements~\cite{fuller2021approximate}]
\label{def:magic-meas}
    Let $\mu^{[q]\pm}_i$ be the single qubit $(3,1)$-magic state corresponding to the $q$th qubit defined as follows:
    \[\displaystyle{
    \mu_1^{[q]+} := \mu_3^{[q]}(0,0,0),\ \ \ \mu_1^{[q]-} := \mu_3^{[q]}(1,1,1),
    }\]
    \[\displaystyle{
    \mu_2^{[q]+} := \mu_3^{[q]}(0,1,1),\ \ \ \mu_2^{[q]-} := \mu_3^{[q]}(1,0,0),
    }\]
    \[\displaystyle{
    \mu_3^{[q]+} := \mu_3^{[q]}(1,0,1),\ \ \ \mu_3^{[q]-} := \mu_3^{[q]}(0,1,0),
    }\]
    \[\displaystyle{
    \mu_4^{[q]+} := \mu_3^{[q]}(1,1,0),\ \ \ \mu_4^{[q]-} := \mu_3^{[q]}(0,0,1),
    }\]
    where $\mu_3^{[q]}(b^{[q]}_{X},b^{[q]}_{Y},b^{[q]}_{Z})$ is defined as Eq.~\eqref{eq:def-magic-state}.
    Random magic measurement is the protocol that obtains ${\bf b}\in\{0,1\}^{3n}$ as follows:
    \begin{enumerate}
    \item Select $i\in\{1,2,3,4\}$ uniformly at random for each $q$.
    \item Measure the state $\rho$ in the basis $\{\mu^{[q]\pm}_i\}$ for all qubits $q\in[n]$ with the selected $i$.
    \item Convert the measurement outcomes to the corresponding classical bits as listed in Table~\ref{tab:outcome-bits}.
    \end{enumerate}
\end{definition}

\begin{table}[tb]
\centering
\caption{Measurement outcome of the random magic measurement and corresponding classical bits. Here, $i_q$ is the selected $i\in\{1,2,3,4\}$ for qubit $q$.}
\label{tab:outcome-bits}
\begin{tabular}{ccc}
\toprule
$i_q$ & Outcome & $\big(b^{[q]}_X,b^{[q]}_Y,b^{[q]}_Z\big)$\\
\midrule
1 & 0 & $(0,0,0)$ \\
1 & 1 & $(1,1,1)$ \\
2 & 0 & $(0,1,1)$ \\
2 & 1 & $(1,0,0)$ \\
3 & 0 & $(1,0,1)$ \\
3 & 1 & $(0,1,0)$ \\
4 & 0 & $(1,1,0)$ \\
4 & 1 & $(0,0,1)$ \\
\bottomrule
\end{tabular}
\end{table}

The bit-strings ${\bf b}$ obtained by a random magic measurement follows the probability distribution $\mathcal{P}_{\rm f}({\bf b})$ (see the proof of Theorem~\ref{theorem:fidelity}).

Because the expectation of the $k$-local Pauli observable in Eq.~\eqref{eq:k-local-exp} is written as the expectation of the random variable
\begin{equation}
\hat{o} := 3^{k/2} (-1)^{\sum^k_{i=1}b^{[q^{(i)}]}_{P}},
\end{equation}
the classical shadow~\cite{huang2020predicting} based on random magic measurements can be considered.
That is, the expectation value of the $k$-local Pauli observables can be estimated by averaging $\hat{o}$ computed from the corrected bits $\{b^{[q]}_{P}\}$ by the random magic measurements.
The upper bound of the variance of random variable $\hat{o}$ is readily derived as 
\begin{equation}
{\rm Var}[\hat{o}]
=\mathbb{E}[\hat{o}^2]-(\mathbb{E}[\hat{o}])^2
\le\mathbb{E}[\hat{o}^2]
=3^k,
\end{equation}
which is equivalent to that of random Pauli measurements~\cite{huang2020predicting}.
The classical snapshot based on random magic measurement is
\begin{equation}
\label{eq:snapshot}
\hat{\rho}=\bigotimes^{n-1}_{q=0} \Big( 3U^{[q]\dagger}|\hat{b}^{[q]}\rangle \langle\hat{b}^{[q]}|U^{[q]} - I\Big),
\ \ \ U^{[q]}\sim\mathcal{U}_{\rm ms},
\end{equation}
where $\hat{b}^{[q]}$ is the measurement outcome of the $q$th qubit.
Equation~\eqref{eq:snapshot} is the same as that of random Pauli measurements except for the unitary ensemble $\mathcal{U}_{\rm ms}$.
In the case of random magic measurements, the unitary ensemble is $\mathcal{U}_{\rm ms}=\{U_i\}^4_{i=1}$, where
$\mu_i^+:=U_i|0\rangle\langle0|U_i^\dagger$ and $\mu_i^-:=U_i|1\rangle\langle1|U_i^\dagger$.
Note that $U_1$ can be expressed as $U_1^\dagger=e^{\red{-}{\rm i}tX}e^{\red{-}{\rm i}sZ}$ where $s=\pi/8$ and $t={\rm arccos}(\sqrt{(1+1/\sqrt{3})/2})$, and $U_2^\dagger=U^\dagger_1X$, $U^\dagger_3=U^\dagger_1Y$, $U^\dagger_4=U^\dagger_1Z$~\cite{fuller2021approximate}.

\begin{figure*}[t]
    \centering
    \subfigure[]{
        \centering
        \includegraphics[scale=.8]{./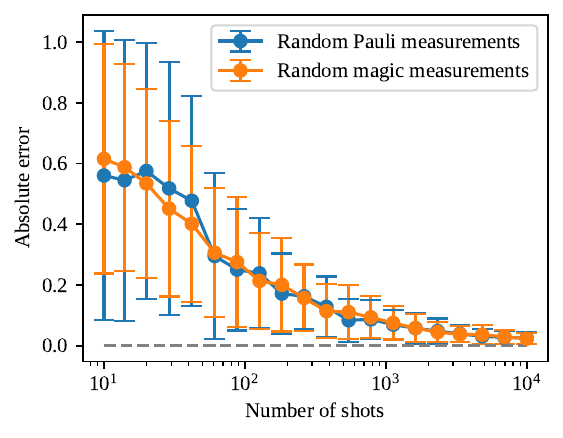}
        \label{fig:shadow-1}
    }
    \subfigure[]{
        \centering
        \includegraphics[scale=.8]{./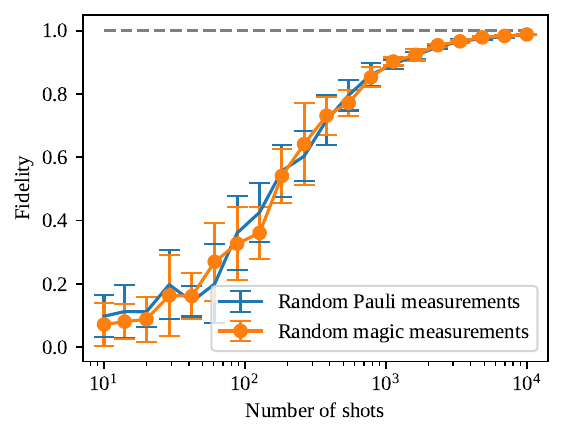}
        \label{fig:shadow-2}
    }
    \caption{Accuracy comparison of a classical shadow obtained using random Pauli measurements and that obtained using $(3,1)$-random magic measurements. (a) Absolute error of the expectation value of a $2$-local Pauli observable in a ten qubit system; (b) Fidelity between the true state and the one obtained using state tomography in a five qubit system.}
  \label{fig:shadow}
\end{figure*}

To demonstrate the performance of random magic measurements, two numerical experiments were carried out.

\paragraph{Expectation Estimation}
A state $\rho$ is sampled from ten-qubit Haar's random states and is used throughout the experiments.
A hundred 2-local Pauli observables are randomly generated and their expectations are evaluated using random Pauli measurements and random magic measurements, respectively.
Although the expectation of the $k$-local Pauli observable is within $[-1,1]$, the range of the expectation estimated by random magic measurements is $[-3^{k/2},3^{k/2}]$.
To mitigate the anomaly estimation, the expectation value estimated by the random magic measurements is truncated so that the value is within $[-1,1]$.
In Fig.~\ref{fig:shadow-1}, the accuracy of the expectation evaluation of 2-local Pauli observable is shown.
The dependence of the accuracy of the expectation evaluation on the number of shots is almost the same for both the random Pauli measurements and random magic measurements, which is as expected because the variance upper bounds of these methods are the same.

\paragraph{State Tomography}
A state $\rho$ is sampled from five-qubit Haar's random states and used throughout the experiments.
$N$ snapshots, $\{\hat{\rho}_i\}^N_{i=1}$, are collected using random Pauli measurements and the random magic measurements.
The candidate matrix is then defined as $\hat{\rho}=\sum^N_{i=1}\hat{\rho}_i/N$.
Here, in general, $\hat{\rho}$ does not satisfy the property of the density matrix, i.e., ${\rm tr}(\hat{\rho})=1$ and $\hat{\rho}\succeq0$.
To avoid improper estimation of the fidelity, the nearest state to $\hat{\rho}$ that satisfied the property of the density matrix is obtained by the optimization, and is defined as the result of the state tomography.
The optimization process is as follows.
The $n$-qubit nearest state is parametrized as $|\hat{\psi}({\boldsymbol\theta})\rangle_k=f_k / \sqrt{\sum^{2^n}_{k'=1} f^*_{k'} f_{k'}}$ where $f_k=\theta^{\rm Re}_k+{\rm i}\theta^{\rm Im}_k$ and ${\boldsymbol\theta}=\{\theta^{\rm Re}_k,\theta^{\rm Im}_k\}^{2^n}_{k=1}$.
Here, $|\hat{\psi}({\boldsymbol\theta})\rangle_k$ denotes the $k$th component of $|\hat{\psi}({\boldsymbol\theta})\rangle$ in the computational basis.
The objective function is defined as $\mathcal{L}({\boldsymbol\theta}):=\|\hat{\rho}-|\hat{\psi}({\boldsymbol\theta})\rangle\langle\hat{\psi}({\boldsymbol\theta})|\|_{\rm F}$ where $\|\bullet\|_{\rm F}$ is the Frobenius norm.
Parameter ${\boldsymbol\theta}$ is optimized by L-BFGS to minimize $\mathcal{L}({\boldsymbol\theta})$.
Then, the tomography result is defined as $\overline{\rho}:=|\hat{\psi}({\boldsymbol\theta}^*)\rangle\langle\hat{\psi}({\boldsymbol\theta}^*)|$, where ${\boldsymbol\theta}^*$ is the optimized parameter.
According to Fig.~\ref{fig:shadow-2}, the dependence of the state tomography accuracy on the number of shots is almost the same for both the random Pauli measurements and random magic measurements.

\section{Node Isolation Problem}
\label{sec:nodes-isolation}
\begin{figure*}[t]
    \centering
    \subfigure[]{
        \centering
        \includegraphics[scale=.8]{./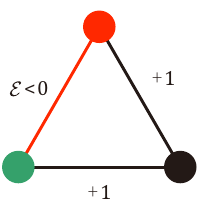}
        \label{fig:edge-noise-a}
    }
    \subfigure[]{
        \centering
        \includegraphics[scale=.8]{./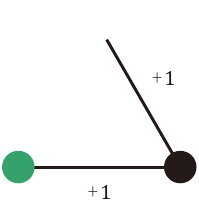}
        \label{fig:edge-noise-b}
    }
    \subfigure[]{
        \centering
        \includegraphics[scale=.8]{./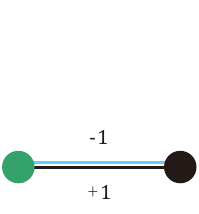}
        \label{fig:edge-noise-c}
    }
    \subfigure[]{
        \centering
        \includegraphics[scale=.8]{./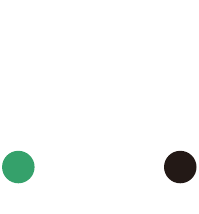}
        \label{fig:edge-noise-d}
    }
    \caption{
    Schematic illustration of the node isolation process.
    (a) The red and green nodes represent the node to be deleted and the node to be kept, respectively. The red edge indicates the candidate edge. The ensemble edge energy of the candidate edge $\mathcal{E}$ is negative.
    (b) After removing the candidate edge.
    (c) Multiplying the sign of the ensemble edge energy of the candidate edge and reconnecting it to the green node.
    (d) After removing the zero-weight edge.
    }
  \label{fig:edge-noise}
\end{figure*}

When the edge weights of problem graph $G$ are integers, the number of isolated nodes, which are defined as nodes with undetermined parity with respect to other nodes, will increase as the iteration progresses.
An example is illustrated in Fig.~\ref{fig:edge-noise}.
In Fig.~\ref{fig:edge-noise-a}, the node to be deleted and the node to be kept are shown in red and green, respectively.
The red edge indicates the candidate whose ensemble edge energy $\mathcal{E}$ is negative.
After removing the candidate edge (Fig.~\ref{fig:edge-noise-b}), the edge that was connected to the red node at one end is reconnected to the green node (Fig.~\ref{fig:edge-noise-c}).
At this time, the edge weight is multiplied by the sign of $\mathcal{E}$.
As a result, the weight of the edge the bottom of the illustration is updated to zero (Fig.~\ref{fig:edge-noise-d}), which means that the black node becomes isolated.
If there are $N_{\rm iso}$ isolated nodes after the parity determination procedure, a brute-force search must be carried out on the $M+N_{\rm iso}$ nodes, which is intractable when there are several tens of isolated nodes.
One easy way to mitigate the node isolation problem is to apply a small perturbation to all edge weights at random.
When perturbed edge weights are used, the probability that a reconnected edge has zero weight becomes negligibly small.

\section{Proofs}
\label{sec:proof}

\begin{proof}[Proof of Theorem~\ref{theorem:fidelity}]
Consider a quantum channel $\mathcal{E}$ that measures an $n$ qubit state $\rho$ with the measurement basis $\ms_m({\bf b})$, which is an $n$ qubit product state of magic states selected uniformly at random:
\begin{equation}
\label{eq:ms-channel}
    \mathcal{E}(\rho)
    :=\frac{1}{2^{(m-1)n}}\sum_{{\bf b}\in\{0,1\}^{mn}}{\rm tr}(\ms_m({\bf b}) \rho)\ms_m({\bf b}),
\end{equation}
where $2^{(m-1)n}$ is the total number of $(m,1)$-magic state bases living in $n$-qubit quantum state space.
According to \cite{fuller2021approximate}, the following holds:
\begin{equation}
\label{eq:ms-rounding}
    {\rm tr}\big(P^{\otimes k}\mathcal{E}(\rho)\big)
    =\frac{1}{m^k}{\rm tr}\big(P^{\otimes k}\rho\big)
\end{equation}
In addition, it is easily confirmed that
\begin{equation}
\label{eq:tr-Pms}
{\rm tr}(P^{\otimes k}\ms_m({\bf b}))=\frac{(-1)^{\sum^k_{i=1} b^{[q^{(i)}]}_{P}}}{m^{k/2}}.
\end{equation}
Substituting Eqs.~\eqref{eq:ms-channel} and \eqref{eq:tr-Pms} into Eq.~\eqref{eq:ms-rounding} yields
\[\displaystyle{
{\rm tr}(P^{\otimes k}\rho)
=m^k{\rm tr}(P^{\otimes k}\mathcal{E}(\rho))
}\]
\[\displaystyle{
=m^k{\rm tr}\Bigg(P^{\otimes k}\frac{1}{2^{(m-1)n}}\sum_{{\bf b}\in\{0,1\}^{mn}}{\rm tr}(\ms_m({\bf b}) \rho)\ms_m({\bf b})\Bigg)
}\]
\[\displaystyle{
=\frac{m^k}{2^{(m-1)n}}\sum_{{\bf b}\in\{0,1\}^{mn}}{\rm tr}(P^{\otimes k}\ms_m({\bf b}) )f_m({\bf b},\rho)
}\]
\begin{equation}
\label{eq:proof-identity}
=\frac{m^{k/2}}{2^{(m-1)n}}\sum_{{\bf b}\in\{0,1\}^{mn}}(-1)^{\sum^k_{i=1} b^{[q^{(i)}]}_{P}}f_m({\bf b},\rho),
\end{equation}
where
\begin{equation}
f_m({\bf b},\rho):={\rm tr}(\ms_m({\bf b})\rho)
\end{equation}
is a fidelity of $\ms_m({\bf b})$ and $\rho$.
Meanwhile, it is easily confirmed that
\begin{equation}
1={\rm tr}(\mathcal{E}(\rho))=\frac{1}{2^{(m-1)n}}\sum_{{\bf b}'\in\{0,1\}^{mn}}f({\bf b}',\rho).
\end{equation}
Then, the following holds:
\[\displaystyle{
{\rm tr}(P^{\otimes k}\rho)
=m^{k/2} \, \frac{\sum_{{\bf b}\in\{0,1\}^{mn}}(-1)^{\sum^k_{i=1} b^{[q^{(i)}]}_P}f_m({\bf b},\rho)}{\sum_{{\bf b}'\in\{0,1\}^{mn}}f_m({\bf b}',\rho)}
}\]
\begin{equation}
=m^{k/2} \, \mathbb{E}_{{\bf b}\sim\mathcal{P}m({\bf b};\rho)} \Big[ (-1)^{\sum^k_{i=1} b^{[q^{(i)}]}_P} \Big],
\end{equation}
where
\begin{equation}
\mathcal{P}_m({\bf b};\rho) := \frac{f_m({\bf b},\rho)}{\sum_{{\bf b}'\in\{0,1\}^{mn}}f_m({\bf b}',\rho)}
\end{equation}
\qed
\end{proof}

\begin{proof}[Proof of Corollary~\ref{corollary:qrac-hamiltonian}]
Consider that a binary variable of node $j$, $b_j$, is mapped to $(q^{(j)}, P^{(j)})$, where $q^{(j)}$ is a qubit index, $P^{(j)}\in\mathfrak{P}\subset\{X,Y,Z\}$ and $|\mathfrak{P}|=m\in\{1,2,3\}$ (see Sec.~\ref{sec:qrao}).
Equation~\eqref{eq:proof-identity} in the case of $k=2$ reduces
\begin{equation}
\label{eq:two-local-identity}
{\rm tr}\big(P^{(j)}_{q^{(j)}}P^{(k)}_{q^{(k)}}\rho\big)
=m\,\mathbb{E}_{{\bf b}\sim\mathcal{P}_m({\bf b};\rho)} \big[ (-1)^{b_j+b_k} \big].
\end{equation}
The left-hand side of Eq.~\eqref{eq:two-local-identity} is written as ${\rm tr}\big(P_{\langle j \rangle}P_{\langle k \rangle}\rho\big)$ in the main body.
Substituting the $(m,1)$-QRAC Hamiltonian (Eq.~\eqref{eq:h-qrac}) into Eq.~\eqref{eq:two-local-identity} leads to
\begin{equation}
\label{eq:Hqrac-w}
    {\rm tr}(H_m \, \rho)=\sum_{(j,k)\in E}\frac{{\rm tr}(\rho)-m{\rm tr}(P_{\langle j \rangle}P_{\langle k \rangle}\,\rho)}{2}w_{jk}
    =\sum_{(j,k)\in E}\frac{1-m^2\mathbb{E}_{{\bf b}\sim\mathcal{P}_m({\bf b};\rho)}\big[(-1)^{b_j+b_k}\big]}{2}w_{jk}.
\end{equation}

According to the definition of $\texttt{CW}({\bf b})$ (Eq.~\eqref{eq:def-cut}),
\begin{equation}
\label{eq:parity-cw}
\sum_{(j,k)\in E}\frac{(-1)^{b_j+b_k}}{2}w_{jk}=\sum_{(j,k)\in E}\frac{1}{2}w_{jk}-\texttt{CW}({\bf b}).
\end{equation}
Substituting Eq.~\eqref{eq:parity-cw} into \eqref{eq:Hqrac-w} gives
\begin{equation}
\label{eq:}
    {\rm tr}(H_{\rm QRAC} \, \rho)
    =m^2\mathbb{E}_{{\bf b}\sim\mathcal{P}_m({\bf b};\rho)}\big[\texttt{CW}({\bf b})]-\frac{m^2-1}{2}\sum_{(j,k)\in E}w_{jk}.
\end{equation}
\qed
\end{proof}

\begin{proof}[Proof of Corollary~\ref{corollary:qrac-variance}]
According to Eq.~\eqref{eq:h-qrac},
\[
\begin{array}{ll}
\displaystyle{H^2_m}
&\displaystyle{= \Big(\sum_{(j,k)\in E}\frac{I-mP_{\langle j \rangle}P_{\langle k \rangle}}{2}w_{jk}\Big)\cdot \Big(\sum_{(j',k')\in E}\frac{I-mP_{\langle j' \rangle}P_{\langle k' \rangle}}{2}w_{j'k'}\Big)} \\[5mm]
&\displaystyle{=\frac{m^2}{4}\sum_{(j,k)\in E}\sum_{(j',k')\in E}P_{\langle j \rangle}P_{\langle k \rangle}P_{\langle j' \rangle}P_{\langle k' \rangle}w_{jk}w_{j'k'}-\frac{mW}{2}\sum_{(j,k)\in E}P_{\langle j \rangle}P_{\langle k \rangle}w_{jk}}+\frac{1}{4}W^2
\end{array},
\]
where $W:=\sum_{(j,k)\in E}w_{jk}$.
The expectation value of $H^2_m$ for an arbitrary state $\rho$ is calculated as
\[
\begin{array}{ll}
\displaystyle{{\rm tr}(H^2_m\rho)}
&\displaystyle{=\frac{m^2}{4}\sum_{(j,k)\in E}\sum_{(j',k')\in E}{\rm tr}(P_{\langle j \rangle}P_{\langle k \rangle}P_{\langle j' \rangle}P_{\langle k' \rangle}w_{jk}w_{j'k'}\rho)-\frac{mW}{2}\sum_{(j,k)\in E}{\rm tr}(P_{\langle j \rangle}P_{\langle k \rangle}w_{jk}}\rho)+\frac{1}{4}W^2\\[5mm]
&\displaystyle{=\frac{m^2}{4}\sum_{(j,k)\in E}\sum_{(j',k')\in E}{\rm tr}(P_{\langle j \rangle}P_{\langle k \rangle}P_{\langle j' \rangle}P_{\langle k' \rangle}w_{jk}w_{j'k'}\rho)+W{\rm tr}(H_m\rho)-\frac{1}{4}W^2}.
\end{array}
\]
Meanwhile, from Eq.~\eqref{eq:tr-qrac},
\[
\begin{array}{ll}
\displaystyle{(\texttt{CW}({\bf b}))^2}
&\displaystyle{= \Big(\sum_{(j,k)\in E}\frac{1-m^2(-1)^{b_j+b_k}}{2m^2}w_{jk}+\frac{m^2-1}{2m^2}W\Big)\Big(\sum_{(j',k')\in E}\frac{1-m^2(-1)^{b_{j'}+b_{k'}}}{2m^2}w_{j'k'}+\frac{m^2-1}{2m^2}W\Big)}\\[5mm]
&\displaystyle{= \frac{1}{4}\sum_{(j,k)\in E}\sum_{(j',k')\in E}(-1)^{b_j+b_k+b_{j'}+b_{k'}}w_{jk}w_{j'k'}-\frac{1}{2}W\sum_{(j,k)\in E}(-1)^{b_j+b_k}w_{jk}+\frac{1}{4}W^2}.
\end{array}
\]
Using Theorem~\ref{theorem:fidelity} and substituting the expression of ${\rm tr}(H^2_m\rho)$ derived above yields
\[
\begin{array}{ll}
\displaystyle{\mathbb{E}_{{\bf b}\sim \mathcal{P}_m({\bf b};\rho)}\big[(\texttt{CW}({\bf b}))^2\big]}
&\displaystyle{=\frac{1}{m^4}{\rm tr}(H^2_m\rho)+\frac{m^2-1}{m^4}W{\rm tr}(W\rho)+\frac{(m^2-1)^2}{4m^2}W^2}.
\end{array}
\]
Combining with Eq.~\eqref{eq:tr-qrac}, the variance of the cut weight can be written as
\[
\begin{array}{ll}
\displaystyle{{\rm Var}_{{\bf b}\sim \mathcal{P}_m({\bf b};\rho)}\big[\texttt{CW}({\bf b})\big]}
&\displaystyle{= \mathbb{E}_{{\bf b}\sim \mathcal{P}_m({\bf b};\rho)}\big[\texttt{CW}({\bf b})^2\big] - \mathbb{E}_{{\bf b}\sim \mathcal{P}_m({\bf b};\rho)}\big[\texttt{CW}({\bf b})\big]^2} \\[5mm]
&\displaystyle{ = \frac{1}{m^4}{\rm tr}(H_m^2\rho)+\frac{m^2-1}{m^4}W{\rm tr}(H_m\rho)+\frac{(m^2-1)^2}{4m^4}W^2} \\[2mm]
&\displaystyle{{\hspace{10mm}}-\Big(\frac{1}{m^2}{\rm tr}(H_m\rho)+\frac{m^2-1}{2m^2}W\Big)^2} \\[5mm]
&\displaystyle{=\frac{1}{m^4}\Big({\rm tr}(H_m^2\rho)-{\rm tr}(H_m\rho)^2\Big)}.
\end{array}
\]
\qed
\end{proof}

\begin{proof}[Proof of Corollary~\ref{corollary:prob}]
According to Eq.~\eqref{eq:k-local-exp},
\[
\begin{array}{ll}
\displaystyle{\mathcal{E}_{j}}
&\displaystyle{:= {\rm tr}(P_{\langle j\rangle}\rho)} \\[2mm]
&\displaystyle{ = \sqrt{m}\,\mathbb{E}_{{\bf b}\sim\mathcal{P}_{\rm f}({\bf b};\rho)}[(-1)^{b_j}]} \\[2mm]
&\displaystyle{=\sqrt{m}\,\big[\mathcal{P}(b_j=0)\cdot(+1)+\mathcal{P}(b_j=1)\cdot(-1)\big]} \\[2mm]
&\displaystyle{=\sqrt{m}\,\big[\mathcal{P}(b_j=0)-(1-\mathcal{P}(b_j= 0))\big]} \\[2mm]
&\displaystyle{=2\sqrt{m}\mathcal{P}(b_j=0)-\sqrt{m}}.
\end{array}
\]
Then,
\[\displaystyle{
\mathcal{P}(b_j=0)=\frac{1}{2}+\frac{1}{2\sqrt{m}}\mathcal{E}_{j}
}\]
\[\displaystyle{
\mathcal{P}(b_j=1)=1-\mathcal{P}(b_j=0)=\frac{1}{2}-\frac{1}{2\sqrt{m}}\mathcal{E}_{j}.
}\]

Similarly, when $q^{(j)}\neq q^{(k)}$,
\[
\begin{array}{ll}
\displaystyle{\mathcal{E}_{jk}}
&\displaystyle{:= {\rm tr}(P_{\langle j\rangle}P_{\langle k\rangle}\rho)} \\[2mm]
&\displaystyle{ = m\,\mathbb{E}_{{\bf b}\sim\mathcal{P}_{\rm f}({\bf b};\rho)}[(-1)^{b_j+b_k}]} \\[2mm]
&\displaystyle{=m\,\big[\mathcal{P}(b_j=b_k)\cdot(+1)+\mathcal{P}(b_j\neq b_k)\cdot(-1)\big]} \\[2mm]
&\displaystyle{=m\,\big[\mathcal{P}(b_j=b_k)-(1-\mathcal{P}(b_j= b_k))\big]} \\[2mm]
&\displaystyle{=2m\mathcal{P}(b_j=b_k)-m}.
\end{array}
\]
Then,
\[\displaystyle{
\mathcal{P}(b_j=b_k)=\frac{1}{2}+\frac{1}{2m}\mathcal{E}_{jk}
}\]
\[\displaystyle{
\mathcal{P}(b_j\neq b_k)=1-\mathcal{P}(b_j=b_k)=\frac{1}{2}-\frac{1}{2m}\mathcal{E}_{jk}.
}\]
However, when $q^{(j)}= q^{(k)}$
\[
\begin{array}{ll}
\displaystyle{\mathcal{P}(b_j=b_k)}
&\displaystyle{= \mathcal{P}(b_j=0)\cdot \mathcal{P}(b_k=0)+\mathcal{P}(b_j=1)\cdot \mathcal{P}(b_k=1)}\\[2mm]
&\displaystyle{= \frac{1}{2}+\frac{1}{2m}\mathcal{E}_j\mathcal{E}_k}
\end{array}
\]
and
\[
\begin{array}{ll}
\displaystyle{\mathcal{P}(b_j\neq b_k)}
&\displaystyle{= 1-\mathcal{P}(b_j=b_k)}\\[2mm]
&\displaystyle{= \frac{1}{2}-\frac{1}{2m}\mathcal{E}_j\mathcal{E}_k}.
\end{array}
\]
\qed
\end{proof}

\section{Matrix Product State}
\label{sec:mps}

Let $\Psi\in\mathbb{C}^{2^n}$ be $n$-qubit quantum state.
Generally, $\Psi$ can be represented as the matrix multiplication of $n$ matrices as 
\begin{equation}
\label{eq:mps}
\Psi
=A^{[n-1]}A^{[n-2]}\cdots A^{[1]}A^{[0]},
\end{equation}
where 
\begin{equation}
\label{eq:}
A^{[i]}\in\mathbb{C}^{\min(2^{i+1},2^{n-i-1},\chi)\times2\times \min(2^i,2^{n-i-1},\chi)}
\end{equation}
and
\begin{equation}
\label{eq:}
(A^{[i]}A^{[j]})_{k(ab)m} = \sum^{\min(2^i,2^{n-i-1},\chi)}_{l=1} (A^{[i]})_{kal}(A^{[j]})_{lbm}.
\end{equation}
The subscript in the parentheses, $(ab)$, summarizes two indices in one index, i.e., $(a,b)=(1,1),(1,2),\ldots,(2,1),(2,2),\ldots$.
The expression of Eq.~\eqref{eq:mps} is called the MPS~\cite{schollwock2011density}.
The dimension between matrices is called the bond dimension and can be restricted to $\chi$ at the expense of representation power.
The number of components of an $n$-qubit state is $2^n$, whereas the total number of matrix components of MPS is $\sim 2n\chi^2$.
For example, a $4$-qubit state $|\psi\rangle \in \mathbb{C}^{2^4}$ with bond dimension $\chi=2$ can be represented as
\[\displaystyle{
\psi_{(abcd)}
=\sum_{\alpha,\beta,\gamma,\delta,\epsilon}
\begin{bmatrix}
\begin{bmatrix}
\bullet\\ \bullet
\end{bmatrix}_{a}
&\begin{bmatrix}
\bullet\\ \bullet
\end{bmatrix}_{a}
\end{bmatrix}_{\alpha\beta}
\begin{bmatrix}
\begin{bmatrix}
\bullet\\ \bullet
\end{bmatrix}_{b}
&
\begin{bmatrix}
\bullet\\ \bullet
\end{bmatrix}_{b}
&
\begin{bmatrix}
\bullet\\ \bullet
\end{bmatrix}_{b}
&
\begin{bmatrix}
\bullet\\ \bullet
\end{bmatrix}_{b}\\[5mm]
\begin{bmatrix}
\bullet\\ \bullet
\end{bmatrix}_{b}
&
\begin{bmatrix}
\bullet\\ \bullet
\end{bmatrix}_{b}
&
\begin{bmatrix}
\bullet\\ \bullet
\end{bmatrix}_{b}
&
\begin{bmatrix}
\bullet\\ \bullet
\end{bmatrix}_{b}
\end{bmatrix}_{\beta\gamma}
\begin{bmatrix}
\begin{bmatrix}
\bullet\\ \bullet
\end{bmatrix}_{c}
&
\begin{bmatrix}
\bullet\\ \bullet
\end{bmatrix}_{c}\\[5mm]
\begin{bmatrix}
\bullet\\ \bullet
\end{bmatrix}_{c}
&
\begin{bmatrix}
\bullet\\ \bullet
\end{bmatrix}_{c}\\[5mm]
\begin{bmatrix}
\bullet\\ \bullet
\end{bmatrix}_{c}
&
\begin{bmatrix}
\bullet\\ \bullet
\end{bmatrix}_{c}\\[5mm]
\begin{bmatrix}
\bullet\\ \bullet
\end{bmatrix}_{c}
&
\begin{bmatrix}
\bullet\\ \bullet
\end{bmatrix}_{c}
\end{bmatrix}_{\gamma\delta}
\begin{bmatrix}
\begin{bmatrix}
\bullet\\ \bullet
\end{bmatrix}_{d}
\\[5mm]
\begin{bmatrix}
\bullet\\ \bullet
\end{bmatrix}_{d}
\end{bmatrix}_{\delta\epsilon}
}\]
\begin{equation}
\label{eq:}
\approx\sum_{\alpha,\beta,\gamma,\delta,\epsilon}
\begin{bmatrix}
\begin{bmatrix}
\bullet\\ \bullet
\end{bmatrix}_{a}
&\begin{bmatrix}
\bullet\\ \bullet
\end{bmatrix}_{a}
\end{bmatrix}_{\alpha\beta}
\begin{bmatrix}
\begin{bmatrix}
\bullet\\ \bullet
\end{bmatrix}_{b}
&\begin{bmatrix}
\bullet\\ \bullet
\end{bmatrix}_{b}\\[5mm]
\begin{bmatrix}
\bullet\\ \bullet
\end{bmatrix}_{b}
&\begin{bmatrix}
\bullet\\ \bullet
\end{bmatrix}_{b}
\end{bmatrix}_{\beta\gamma}
\begin{bmatrix}
\begin{bmatrix}
\bullet\\ \bullet
\end{bmatrix}_{c}
&\begin{bmatrix}
\bullet\\ \bullet
\end{bmatrix}_{c}\\[5mm]
\begin{bmatrix}
\bullet\\ \bullet
\end{bmatrix}_{c}
&
\begin{bmatrix}
\bullet\\ \bullet
\end{bmatrix}_{c}
\end{bmatrix}_{\gamma\delta}
\begin{bmatrix}
\begin{bmatrix}
\bullet\\ \bullet
\end{bmatrix}_{d}
\\[5mm]
\begin{bmatrix}
\bullet\\ \bullet
\end{bmatrix}_{d}
\end{bmatrix}_{\delta\epsilon},
\end{equation}
where $\bullet$ indicates the matrix element and the subscripts represent the index symbols of the inner submatrix elements.

Let $P$ be the identity or Pauli matrix and
\begin{equation}
\label{eq:ham}
H=\sum^{K}_{k=1} C_k \bigotimes^{n-1}_{l=0} P_{kl}
\end{equation}
be a $n$-qubit Hamiltonian that consists of a weighted sum of $K$ Pauli products.
Similar to the MPS, Eq.~\eqref{eq:ham} can be represented as matrix multiplication of $n$ matrices as
\begin{equation}
\label{eq:mpo}
H=B^{[n-1]}B^{[n-2]}\cdots B^{[1]}B^{[0]},
\end{equation}
where 
\begin{equation}
\label{eq:}
B^{[i]}\in\left\{
    \begin{array}{ll}
        \displaystyle{ \mathbb{C}^{K\times2\times2\times 1} } & i=0  \\[5mm]
        \displaystyle{ \mathbb{C}^{1\times2\times2\times K} } & i=n-1  \\[5mm]
        \displaystyle{\mathbb{C}^{K\times2\times2\times K}} & {\rm else}
    \end{array}
\right. ,
\end{equation}
\begin{equation}
\label{eq:}
    \left\{
        \begin{array}{ll}
            \displaystyle{ (B^{[i]})_{kaa'1}=C_k(P_{ki})_{aa'} } & i=0  \\[5mm]
            \displaystyle{ (B^{[i]})_{1aa'k}=(P_{ki})_{aa'} } & i=n-1  \\[5mm]
            \displaystyle{(B^{[i]})_{kaa'k}=(P_{ki})_{aa'}} & {\rm else}
        \end{array}
    \right.
\end{equation}
and
\begin{equation}
\label{eq:}
(B^{[i]}B^{[j]})_{l(ab)(a'b')m} = \sum^{K}_{k=1} (B^{[i]})_{laa'k}(B^{[j]})_{kbb'm}.
\end{equation}
For example,
\[\displaystyle{
H_{(abcd)(a'b'c'd')}=(C_1IXYI+C_2IZII+C_3IYIX)_{(abcd)(a'b'c'd')}
}\]
\begin{equation}
\label{eq:}
=\sum_{\alpha,\beta,\gamma,\delta,\epsilon}
\begin{bmatrix}
[I]_{aa'}&[I]_{aa'}&[I]_{aa'}
\end{bmatrix}_{\alpha\beta}
\begin{bmatrix}
[X]_{bb'}\\ &[Z]_{bb'}\\ &&[Y]_{bb'}
\end{bmatrix}_{\beta\gamma}
\begin{bmatrix}
[Y]_{cc'}\\ &[I]_{cc'}\\ &&[I]_{cc'}
\end{bmatrix}_{\gamma\delta}
\begin{bmatrix}
[C_1I]_{dd'} \\ [C_2I]_{dd'} \\ [C_3X]_{dd'}
\end{bmatrix}_{\delta\epsilon}
\end{equation}
The expression of Eq.~\eqref{eq:mpo} is called the matrix product operator (MPO)~\cite{schollwock2011density}.

The MPO is a set of sparse matrices and thus can be efficiently stored in classical memory in COOrdinate (COO), Compressed Sparse Row (CSR), or Compressed Sparse Column (CSC) formats.

Using the MPS and MPO, the expectation value $\langle\Psi|H|\Psi\rangle$ can be efficiently calculated by the matrix multiplication of $\{A^{[i]}\}$ and $\{B^{[i]}\}$.

All edge energy can be simultaneously evaluated by modifying the MPO of the problem Hamiltonian.
\begin{equation}
\label{eq:}
    \left\{
        \begin{array}{ll}
            \displaystyle{ (B^{[i]})_{kaa'1}=(P_{ki})_{aa'} } & i=0  \\[5mm]
            \displaystyle{(B^{[i]})_{kaa'k}=(P_{ki})_{aa'}} & {\rm else}
        \end{array}
    \right.
\end{equation}
For example,
\[\displaystyle{
\begin{bmatrix}
C_1IXYI_{(abcd)(a'b'c'd')}\\
C_2IZII_{(abcd)(a'b'c'd')}\\
C_3IYIX_{(abcd)(a'b'c'd')}
\end{bmatrix}
}\]
\begin{equation}
\label{eq:mpo2}
=\sum_{\alpha,\beta,\gamma,\delta,\epsilon}
\begin{bmatrix}
[I]_{aa'}\\&[I]_{aa'}\\&&[I]_{aa'}
\end{bmatrix}_{\alpha\beta}
\begin{bmatrix}
[X]_{bb'}\\ &[Z]_{bb'}\\ &&[Y]_{bb'}
\end{bmatrix}_{\beta\gamma}
\begin{bmatrix}
[Y]_{cc}\\ &[I]_{cc'}\\ &&[I]_{cc'}
\end{bmatrix}_{\gamma\delta}
\begin{bmatrix}
[C_1I]_{dd'} \\ [C_2I]_{dd'} \\ [C_3X]_{dd'}.
\end{bmatrix}_{\delta\epsilon}
\end{equation}

\section{Implementation Details for GW, CirCut, the RQAOA, and QRAO}
\label{sec:implement-details}
For GW, an official MATLAB implementation of SDPT3~\cite{toh1999sdpt3,tutuncu2003solving} is used to solve the SDP.
The gap tolerance is set to $10^{-9}$ for high-precision calculation.
For hyperplane cutting, which is a post-processing step in GW performed after computing the SDP solution, in-house Python code is used.
The number of hyperplane cuttings is set to $10,000$.

For CirCut~\cite{burer2002rank}, the official Fortran implementation (\url{https://www.cmor-faculty.rice.edu/~zhang/circut/index.html}) is used.
The hyperparameters for CirCut are not changed from the default ones.

For RQAOA, in-house Fortran code is used.
The analytic form of $\langle\psi_{\rm QAOA}({\boldsymbol\theta})|Z_jZ_k|\psi_{\rm QAOA}({\boldsymbol\theta})\rangle$~\cite{bravyi2020obstacles} is used to evaluate the expectation value of the problem Hamiltonian and edge energy, where $|\psi_{\rm QAOA}({\boldsymbol\theta})\rangle=e^{i\beta B}e^{i\gamma C}|+\rangle^{\otimes n}$ is a level-1 QAOA ansatz, $B=\sum^{n-1}_{i=0}X_i$, $C=\sum_{(j,k)\in E}w_{jk}Z_jZ_k$, and ${\boldsymbol\theta}=\{\beta,\gamma\}$.
Only the optimal $\gamma$ is determined using a grid search; the optimal $\beta$ is calculated for each $\gamma$ using the same method used in \cite{bravyi2022hybrid}.
We derived the concrete representation of the optimal $\beta$ as
\begin{equation}
    \beta^* = -\frac{1}{4}{\rm arctan}\left(\frac{\frac{1}{2}(R_{11}-R_{22}-R_{33}+R_{44})+R_{14}-R_{23}}{-I_{12}-I_{13}+I_{24}+I_{34}}\right)+\frac{\pi}{8},
\end{equation}
where
\begin{equation}
\begin{bmatrix}
    R_{11} & R_{12}+iI_{12} & R_{13}+iI_{13} & R_{14}+iI_{14} \\
    R_{12}-iI_{12} & R_{22} & R_{23}+iI_{23} & R_{24}+iI_{24} \\
    R_{13}-iI_{13} & R_{23}-iI_{23} & R_{33} & R_{34}+iI_{34} \\
    R_{14}-iI_{14} & R_{24}-iI_{24} & R_{34}-iI_{34} & R_{44}
\end{bmatrix}
:=\sum_{(u,v)\in E}w_{uv}\rho_{uv}
\end{equation}
and
\begin{equation}
\rho_{uv}:=
{\rm tr}_{uv}\Big(e^{i\gamma C}|+\rangle^{\otimes n}\langle+|^{\otimes n}e^{-i\gamma C}\Big).
\end{equation}
Here, ${\rm tr}_{uv}(\sigma)$ denotes the reduced state obtained by tracing out, excepting the $u$th and $v$th qubits from $\sigma$.
For an algorithm to calculate $\rho_{uv}$, see \cite{bravyi2022hybrid}.
The parameter spaces are set to $\beta\in[0,\frac{\pi}{2}]$ and $\gamma\in[0,\pi]$.
Once the optimal $\gamma$ is found in $50$ equidistant grids of $[0,\pi]$, the more precise solution is searched for in the fine-grained grids around the course-grained solution with a mesh size is $1/50$.
Fine-graining is repeated twice, resulting in a mesh size of $1/250$ of the parameter space.

For QRAO, in-house Python code is used.
The MPS ansatz with bond dimension $\chi=2$ is used for $|\psi({\boldsymbol\theta})\rangle$ and Pauli rounding is used for decoding.
The same optimizer as used for RQRAO, L-BFGS with the strong Wolfe condition, is used to optimize $|\psi({\boldsymbol\theta})\rangle$.

\begin{figure*}[t]
    \centering
    \subfigure[]{
        \centering
        \includegraphics[scale=.8]{./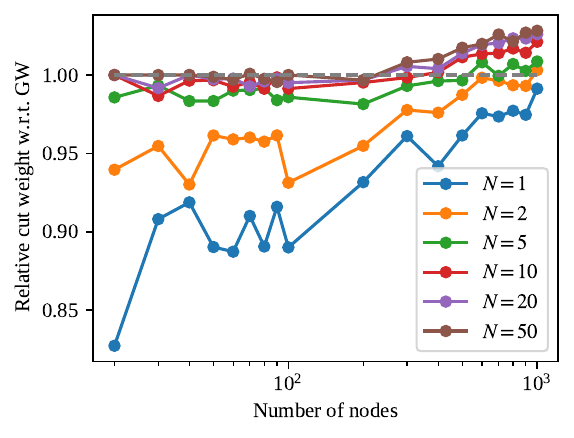}
        \includegraphics[scale=.8]{./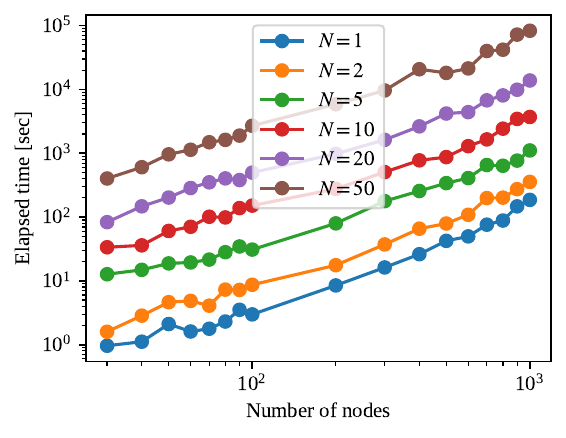}
        \label{fig:ablation-N}
    }
    \subfigure[]{
        \centering
        \includegraphics[scale=.8]{./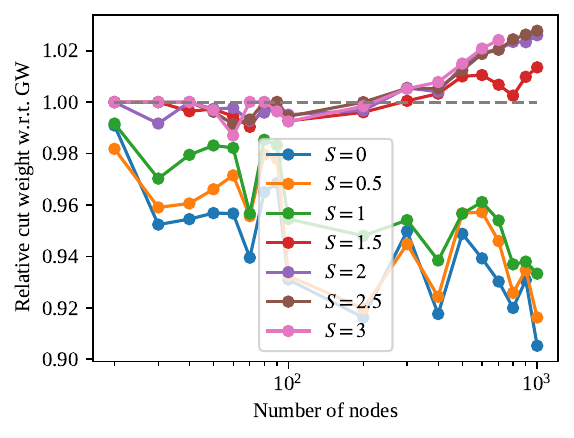}
        \includegraphics[scale=.8]{./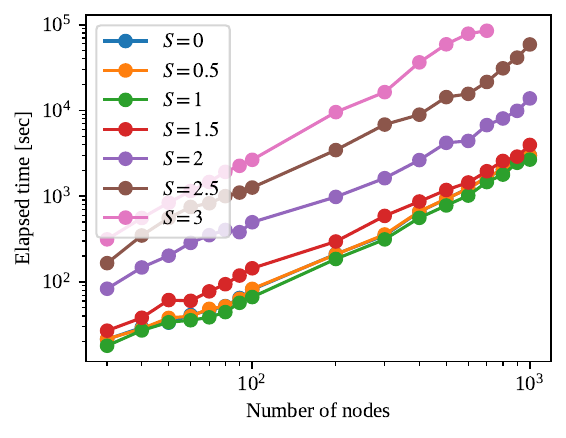}
        \label{fig:ablation-S}
    }
    \subfigure[]{
        \centering
        \includegraphics[scale=.8]{./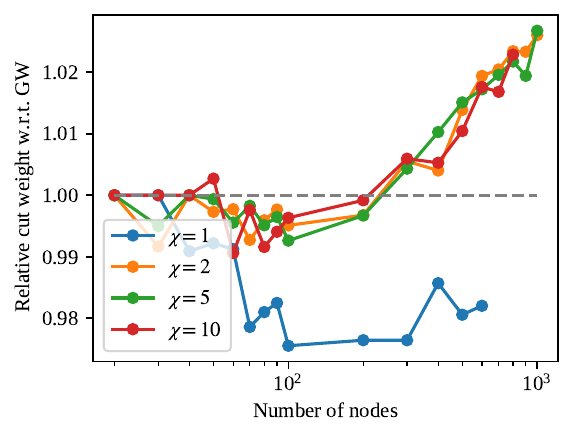}
        \includegraphics[scale=.8]{./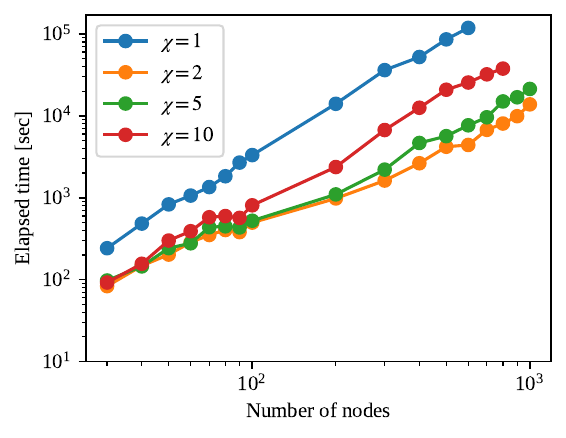}
        \label{fig:ablation-chi}
    }
    \caption{Results of the ablation study. Only one hyperparameter is varied for each trial. The problem graph instances are $3$-regular graphs with $\pm1$ edge weights. For each number of nodes, $10$ graph instances are randomly generated. The varying hyperparameters are the
    (a) number of ensembles $N$;
    (b) scale factor $S$; and
    (c) bond dimension $\chi$.
    }
  \label{fig:ablation-others}
\end{figure*}

\begin{figure*}[t]
    \centering
    \subfigure[]{
        \centering
        \includegraphics[scale=1]{./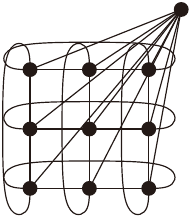}
        \label{fig:2d-toric-a}
    }
    \subfigure[]{
        \centering
        \includegraphics[scale=.8]{./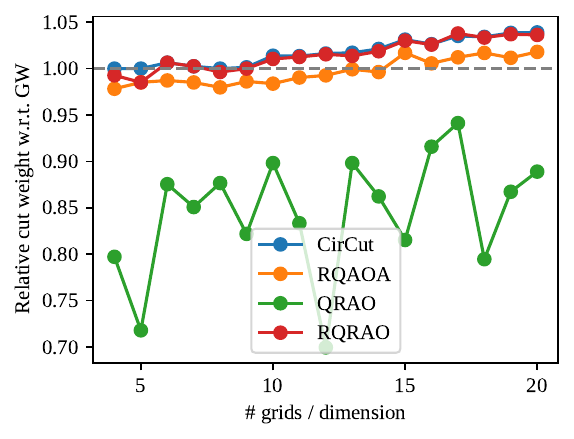}
        \includegraphics[scale=.8]{./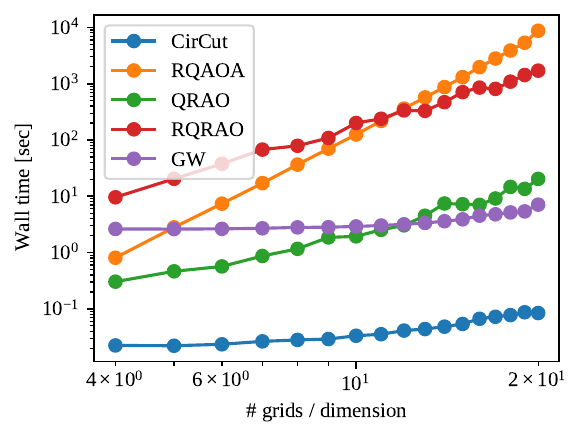}
        \label{fig:2d-toric-b}
    }
    \caption{\red{(a) Example of 2D toric graph with one external nodes of size $3\times 3$ (\# girds / dimension $=3$); (b) Summary of results for 2D-toric graphs.}
    }
  \label{fig:2d-toric}
\end{figure*}

\section{Additional Sensitivity Analysis Results (\texttt{ExpHyp})}
\label{sec:ablation-others}

Here, the sensitivities of all RQRAO hyperparameters except for the number of embeddings per qubit $m$ on the performance of the cut weight and runtime are reported.
The results are summarized in Fig.~\ref{fig:ablation-others}.
In brief, a higher cut weight is obtained through a longer runtime.

Except for the number of embeddings per qubit $m$, the sensitive hyperparameters with respect to cut weight performance are the number of ensembles $N$ and scale factor $S$.
In Fig.~\ref{fig:ablation-N}, the cut weight increases as the number of ensembles $N$ increases.
In addition, as the number of ensembles increases, the number of edges whose parities are simultaneously determined in one optimization step tends to decrease.
This is because the edges, which are a candidate for determining the parity when the number of ensembles is small, drop out as the accuracy of the parity confidence improves, which is achieved by increasing the number of ensembles.
This trend (i.e., the number of candidates decreases with as the accuracy of the parity confidence increases), can be seen in the larger scale factor $S$ in Fig.~\ref{fig:ablation-S}.
A larger $S$ tends to decrease the number of candidates because the number of edges with nonzero $\mathcal{E}_{jk}$ decreases.

The effect of the bond dimension $\chi$ on the cut weight performance needs to be carefully considered.
Recall that a large entanglement can be treated using a large bond dimension.
The $\chi=1$ result is worse both in terms of relative cut weight and runtime than the $\chi>1$ results.
In the case of $\chi=1$, $|\psi({\boldsymbol\theta})\rangle$ is restricted to a direct product state that tends to be trapped in a lower energy state than in the $\chi>1$ case because the search space is much smaller.
As a result, each edge energy in an ensemble, $\mathcal{E}^{(t)}_{jk}$, is no longer biasedly distributed in either positive or negative sides because the lower energy state includes smaller information of a large cut weight.
By contrast, when $\chi>1$, no significant difference in the relative cut weight can be seen.
This does not mean, however, that a large entanglement will not affect the cut weight performance because even the value of $\chi=10$, which was the maximum value considered in the experiment, is small compared with the maximum bond dimension that the system can take, which is $\chi_{\rm max}=2^{\lfloor n/2 \rfloor}$.
For example, consider the case of $|V|=300$.
The number of qubits is assumed to be $300/3=100$, which can be achieved by a $(3,1)$-QRAC formulation if the graph is sufficiently sparse.
In this case, $\chi_{\rm max}= 2^{\lfloor 99/2 \rfloor}\approx 10^{14}$, which is much larger than the experiments considered in this study.
To take into account a large bond dimension within a reasonable runtime, computation using a real quantum device would be required.

\red{
\section{Results for 2D Toric Graphs}
\label{sec:2d-toric}
While the specific characteristics of instances where RQAOA performs particularly well are not fully understood, in~\cite{bravyi2020obstacles} RQAOA was compared against GW using 2D toric graphs with one extra node connected with all other nodes (Fig.~\ref{fig:2d-toric-a}), referred to as $G_{\rm toric}$.
To verify that our proposed model also surpasses RQAOA on $G_{\rm toric}$, we conducted numerical simulations for $G_{\rm toric}$ under the same settings as \texttt{ExpScale}.
In Fig.~\ref{fig:2d-toric-b}, it is evident that RQAOA outperforms GW in terms of cut weight when the number of grids per dimension exceeds 15, consistent with the findings of \cite{bravyi2020obstacles}. However, both CirCut and our proposed RQRAO surpass RQAOA.
Note that the trend in time complexity is nearly identical to the results observed for 3-regular graphs, as shown in Fig.~\ref{fig:scalability-time}.
}

\FloatBarrier

\section{Related Work}
\label{sec:related}
\subsection{Classical Algorithm for the \maxcut problem}
\label{sec:classical-maxcut}

Before reviewing the classical algorithms for the \maxcut problem, we reformulate Definition~\ref{def:maxcut} to make the existing classical algorithms easier to understand.
Because $(-1)^{b_j+b_k}=v_jv_k$ where $v_j:=(-1)^{b_j}$, the \maxcut problem can be reformulated as
\begin{equation}
\label{eq:def-maxcut-re}
\max_{{\bf X}} \texttt{CW}_{\rm X}({\bf X}),
\end{equation}
where
\begin{equation}
\label{eq:def-cut-re}
\texttt{CW}_{\rm X}({\bf X}):= {\rm tr}({\bf w}{\bf F}({\bf X})),
\end{equation}
${\bf F}({\bf X}):=({\bf 1}-{\bf X})/2$, and ${\bf X}:={\bf v}{\bf v}^{\mathsf T}$.
By definition, ${\bf X}$ is restricted as ${\rm diag}({\bf X})={\bf 1}$, ${\rm rank}({\bf X})=1$, and ${\bf X}\succeq0$.

The best known theoretically guaranteed generic algorithm for the \maxcut problem is the GW~\cite{goemans1995improved} as stated in Sec.~\ref{sec:introduction}.
In the GW algorithm, the constraint ${\rm rank}{(\bf X})=1$ is removed.
This transforms the \maxcut problem to a semidefinite program (SDP), which can be solved in polynomial time for the number of nodes.
After solving the SDP, the solution ${\bf v}\in\{\pm1\}^{|V|}$ needs to be extracted from the obtained ${\bf X}$.
When ${\bf X}$ is written in the form ${\bf X}={\bf vv}^\mathsf{T}$, the solution is ${\bf v}$.
If not, as is generally the case here, the obtained ${\bf X}$ is decomposed as ${\bf X}={\bf L}{\bf L}^\mathsf{T}$ using, for example, Cholesky decomposition, where ${\bf L}=\begin{bmatrix}{\bf l}_1&\dots&{\bf l}_{|V|}\end{bmatrix}^\mathsf{T}\in\mathbb{R}^{|V|\times|V|}$ is a lower triangular matrix and ${\bf l}_i\in{\mathbb R}^{|V|}$ is a unit vector.
Then, the solution $\{v_i\}$ is determined by the random projection of ${\bf l}_i$ as $v_i=1$ if ${\bf l}_i^\mathsf{T}{\bf r}\ge0$ and $v_i=-1$ otherwise.
Here, ${\bf r}\in{\mathbb R}^{|V|}$ is a random vector uniformly distributed on the $|V|$-dimensional unit sphere.
This can be viewed as dividing the set $\{{\bf l}_i\}^{|V|}_{i=1}$ into two sets by a hyperplane with ${\bf r}$ as the normal vector.
When the hyperplane cutting is performed uniformly at random, the lower bound of the expected cut weight is $0.878\,\texttt{CW}({\bf b}^*)$~\cite{goemans1995improved}.
This bound is explained as follows.
According to \cite[Theorem 3.1]{goemans1995improved}, the \maxcut problem can be further reformulated as 
\begin{equation}
\{{\bf l}^{\rm opt}_i\}:=\mathop{\rm argmax}_{\{{\bf l}_i\}}{\mathbb E}_{{\bf b}\sim\mathcal{P}^{\rm MC}({\bf b};\{{\bf l}_i\})}[\texttt{CW}({\bf b})]    
\end{equation}
where
\begin{equation}
    \mathcal{P}^{\rm MC}(b_j\neq b_k;\{{\bf l}_i\}):=\frac{{\rm arccos}({\bf l}_j^\mathsf{T}{\bf l}_k)}{\pi}.
\end{equation}
In the GW algorithm, the objective is rewritten as
\begin{equation}
\label{eq:gw-re}
    \{{\bf l}^*_i\}:=\mathop{\rm argmax}_{\{{\bf l}_i\}}{\mathbb E}_{{\bf b}\sim\mathcal{P}^{\rm GW}({\bf b};\{{\bf l}_i\})}[\texttt{CW}({\bf b})]
\end{equation}
where
\[\displaystyle{
    \mathcal{P}^{\rm GW}(b_j\neq b_k;\{{\bf l}_i\})
    :=\frac{1-{\bf l}_j^\mathsf{T}{\bf l}_k}{2}
}\]
\begin{equation}
\label{eq:ineq}
    \le \frac{1}{0.878} \mathcal{P}^{\rm MC}(b_j\neq b_k;\{{\bf l}_i\}).
\end{equation}
The last inequality of Eq.~\eqref{eq:ineq} comes from \cite[Lemma 3.4, Lemma 3.5]{goemans1995improved}.
\footnote{
$\min_{x\in[-1,1]}\frac{{\rm arccos}(x)}{\pi}\big(\frac{1-x}{2}\big)^{-1}\approx0.878$.
}
The right-hand side of Eq.~\eqref{eq:gw-re} is an SDP, and hence it can be solved in polynomial time.
The resulting ${\mathbb E}_{{\bf b}\sim\mathcal{P}^{\rm MC}({\bf b};\{{\bf l}^*_i\})}[\texttt{CW}({\bf b})]$, which is the expected cut weight of random hyperplane cutting, is guaranteed to be greater than $0.878\,{\mathbb E}_{{\bf b}\sim\mathcal{P}^{\rm MC}({\bf b};\{{\bf l}^{\rm opt}_i\})}[\texttt{CW}({\bf b})]=0.878\max_{{\bf b}}\texttt{CW}({\bf b})$ because
\[\displaystyle{
    \mathcal{P}^{\rm MC}(b_j\neq b_k;\{{\bf l}^*_i\})
    \ge 0.878\,\mathcal{P}^{\rm GW}(b_j\neq b_k;\{{\bf l}^*_i\})
}\]
\[\displaystyle{
    \ge 0.878\,\mathcal{P}^{\rm GW}(b_j\neq b_k;\{{\bf l}^{\rm opt}_i\})
}\]
\begin{equation}
    = 0.878\,\mathcal{P}^{\rm MC}(b_j\neq b_k;\{{\bf l}^{\rm opt}_i\}).
\end{equation}
Equation~\eqref{eq:gw-re} is very similar to Eq.~\eqref{eq:prop2}, but the rounding process of the GW is not the same as that of Eq.~\eqref{eq:prop1}.
Instead of carrying out ${\bf b}^{\rm R}=\mathop{\rm argmax}_{{\bf b}}\mathcal{P}^{\rm GW}({\bf b};\{{\bf l}^*_i\})$ or ${\bf b}^{\rm R}=\mathop{\rm argmax}_{{\bf b}}\mathcal{P}^{\rm MC}({\bf b};\{{\bf l}^*_i\})$, a number of ${\bf b}$ obeying $\mathcal{P}^{\rm MC}({\bf b};\{{\bf l}^*_i\})$ are sampled by random hyperplane cutting and then, the best $\texttt{CW}({\bf b})$ is adopted as the candidate solution.

To solve the GW algorithm faster with less memory, various SDP solvers have been developed~\cite{majumdar2020recent}.
The runtime of the state-of-the-art SDP solvers are $\widetilde{O}(\sqrt{n}(mn^2+m^\omega+n^\omega)\log(1/\epsilon)))$~\cite{jiang2020faster} and $\widetilde{O}((\sqrt{n}(m^2+n^4)+m^\omega+n^{2\omega})\log(1/\epsilon)$~\cite{huang2022solving}, where $n$ is the problem size, $m$ the number of constraints, $\omega$ the matrix multiplication constant $\le2.372927$, and $\epsilon$ the relative accuracy.
In the case of the \maxcut problem, the runtime is $\widetilde{O}(|V|^{3.5}\log(1/\epsilon))$ because $n=m=|V|$.
The overall time complexity for the GW algorithm depends on the runtime of SDP because the na\"ive Cholesky decomposition takes $O(n^3)=O(|V|^3)$ to run, which is faster than the state-of-the-art SDP solver, as long as the number of trials of random hyperplane cutting is less than the time complexity of the SDP.
Note that the computational complexity of SDP with respect to the number of nodes can be further reduced to $\widetilde{O}(|V|/\epsilon^{3.5})$ using a specific algorithm for the \maxcut, by taking a disadvantage with respect to the accuracy~\cite{lee2020widetilde}.

Although there is no approximation ratio guarantee nor computational complexity bound, there are several classical heuristics that are empirically faster and have higher approximation ratios than the GW algorithm.
The rank-two relaxation algorithm~\cite{burer2002rank} relaxed ${\rm rank}({\bf X})=1$ to ${\bf X}={\bf AA}^\mathsf{T}$, where ${\bf A}=\begin{bmatrix}{\bf a}_1&\dots&{\bf a}_{|V|}\end{bmatrix}^\mathsf{T}$, ${\bf a}_j=\begin{bmatrix}\cos\theta_j&\sin\theta_j\end{bmatrix}^\mathsf{T}$, and $\theta_j\in[0,2\pi)$.
Similar to the GW, the objective of the rank-two relaxation can be reformulated as
\begin{equation}
\{\theta^*_j\} := \mathop{\rm argmax}_{\{\theta_j\}}\mathbb{E}_{\mathcal{P}^{\rm R2}({\bf b};\{\theta_j\})}[\texttt{CW}({\bf b})],
\end{equation}
where
\begin{equation}
\mathcal{P}^{\rm R2}(b_j\neq b_k;\{\theta_j\})=\sin^2\Big(\frac{\theta_j-\theta_k}{2}\Big).
\end{equation}
As with the GW algorithm, the rank-two relaxation does not take the sample with the highest probability as a candidate.
Moreover, because the hyperplane is only two-dimensional, random hyperplane cutting is not used.
Instead, exhaustive search is used, expressed as
\begin{equation}
    {\bf b}^{\rm R2}=\texttt{CW}({\bf b}^*(\alpha^*)),
\end{equation}
where
\begin{equation}
    b^*_j(\alpha)=\left\{
    \begin{array}{lll}
        0 & {\rm for} & \cos(\theta^*_j+\alpha) \ge 0 \\[2mm]
        1 & {\rm for} & \cos(\theta^*_j+\alpha) < 0 \\
    \end{array}
    \right.
\end{equation}
and
\begin{equation}
    \alpha^*=\mathop{{\rm argmax}}_{\alpha}\texttt{CW}({\bf b}^*(\alpha)).
\end{equation}
To optimize ${\theta_j}$, a gradient-based optimizer is used.
In addition, $\{1,2\}$-local search and restarting optimization are incorporated.
Here, $k$-local search is an exhaustive method that searches the entire solution within Hamming distance $k$ of the current solution.
Breakout local search~\cite{benlic2013breakout} repeats the 1-local search and perturbation of the current solution using a tabu list.
Once a local search falls into a local optimum, perturbation is added to the current solution, and then the local search is restarted.
In the perturbation phase, one of three kinds of perturbation is applied with a specific probability.
The tabu list is used to prohibit the perturbation of the listed nodes so that the same local minimum will not be visited again.

\subsection{Quantum Algorithms for the \maxcut problem}
\label{sec:quantum-maxcut}

The quantum algorithms for the \maxcut problem are categorized into two research streams.
One is quantum approximate optimization algorithms (QAOAs)~\cite{qaoa1,wang2018quantum} and the other is quantum algorithms for SDP~\cite{brandao2017quantum,van2020quantum,brando_et_al:LIPIcs:2019:10603,vanapeldoorn_et_al:LIPIcs:2019:10675,kerenidis2020quantum,brandao2022faster,bharti2022noisy,patel2021variational,patti2023quantum}.
Here, quantum annealing is excluded.

QAOA variants are well summarized in a recent review~\cite{blekos2023review} and are not re-summarized here.

The quantum SDP algorithms are categorized into two types.
One encodes the solution to a density matrix and the other uses QRAM~\cite{giovannetti2008quantum} to store the solution.

For the density matrix encoding approach, there are two types of encoding methods.
One uses a Gibbs state to represent a solution~\cite{brandao2017quantum,van2020quantum,brando_et_al:LIPIcs:2019:10603,vanapeldoorn_et_al:LIPIcs:2019:10675,brandao2022faster}, which can be regarded as a solution that is encoded in the corresponding Hamiltonian.
The Hamiltonian used in the Gibbs state is updated by the matrix multiplicative weight update method~\cite{arora2007combinatorial} or the matrix exponentiated gradient update method~\cite{tsuda2005matrix}.
The other density matrix encoding approach uses a parametrized quantum circuit to represent a solution~\cite{bharti2022noisy,patel2021variational,patti2023quantum}.
The parameter is optimized by frameworks using the variational quantum algorithm~\cite{vqa,vqa2}.

The state-of-the-art quantum SDP algorithm~\cite{huang2022faster} is based on the primal-dual central path method, which uses quantum linear algebra and stores the solution in QRAM.
The runtime of this quantum SDP algorithm is $O((mn^{1.5}+n^3){\rm poly}(\kappa,\log(mn/\epsilon))$ where $\kappa$ is the condition number of the matrices appearing in the algorithm, and $\epsilon$ is the accuracy parameter.
In the case of the \maxcut problem, we have $\widetilde{O}(|V|^3{\rm poly}(\kappa,\log(|V|^2/\epsilon))$, which is a slight improvement on its classical counterpart, which is $\widetilde{O}(|V|^{3.5}\log(1/\epsilon))$.

Although these algorithms have the potential to solve an SDP faster than classical ones, the approximation ratio is the same as that of the GW algorithm, which is empirically worse than that of classical heuristics.

Very recently, \cite{munoz2023low} proposed the quantum-inspired algorithm based on the observation that ADAPT-QAOA~\cite{zhu2022adaptive} can be approximately expressed as the Clifford circuit, which can be efficiently simulatable using a classical computer.
Their model showed the comparable cut weight performance to the GW with several thousand hyperplane cuttings.
\subsection{QRACs}
\label{sec:qrac}
The concept that encodes $m$ bits into $n$ (qu)bits was originally proposed by Wiesner as {\it conjugate coding}~\cite{wiesner1983conjugate} and later rediscovered by Ambainis et al.~\cite{ambainis1999dense} as {\it (quantum) random access codes ((Q)RACs)}.
$(m,n,p)$ random access encoding is defined as the function that maps $m$ classical bits into $n$ (qu)bits with a decoding probability at least $p$~\cite[Definition 1.1]{ambainis1999dense}.
The minimum number of $n$ with $p>1/2$ is bounded as $(1-\eta(p))m$ for both classical~\cite[Theorem 2.1]{ambainis1999dense} and quantum~\cite[Theorem 2.3]{nayak1999optimal} settings, where $\eta(p)$ is the binary entropy function.
In the case of $(2,1,p)$- and $(3,1,p)$-QRACs, there exists $p=\frac{1}{2}+\frac{1}{2\sqrt{2}}$~\cite[Lemma 3.1]{ambainis1999dense} and $p=\frac{1}{2}+\frac{1}{2\sqrt{3}}$~\cite[attributed to Chuang]{ambainis1999dense}, respectively, which are consistent with our results (Eq.~\eqref{eq:prob-bj}) in the case of $\mathcal{E}_j\in\{\pm1\}$.
It was proved that $(2^n,n,>1/2)$ classical random access encoding~\cite[Theorem 5.5]{iwama2007unbounded}
and $(2^{2n},n,>1/2)$ quantum random access encoding~\cite[Theorem 2]{hayashi20064} do not exist.
From these observations, $(3,1)$-QRAC shows the maximum number of classical bits that can be embedded in one qubit with a decoding probability that is larger than $1/2$.
This is why we only consider $m\in\{1,2,3\}$ in this study.
In addition to $(m,1,p)$-QRAC, $(m,2,p)$-QRAC is particularly of interest and well studied in terms of the theoretical bound~\cite{hayashi20064,liabotro2017improved,imamichi2018constructions,manvcinska2022geometry} and concrete construction scheme~\cite{teramoto2023quantum}.
For arbitrary $(m,n,p)$-QRAC, the upper bound of decoding probability $p$ is improved by considering shared randomness~\cite{ambainis2008quantum}, shared entanglement~\cite{pawlowski2010entanglement}, and $d$-level system~\cite{tavakoli2015quantum,liabotro2017improved}.
The information that can be decoded is not limited to the $m$ bits that are encoded, but also extends to functions that use $k$-out-of-$m$ bits~\cite{ben2008hypercontractive,doriguello2021quantum}.
Using this extension, QRACs have also been applied in machine learning to efficiently encode discrete features~\cite{yano2021efficient}.


\end{document}